%
%
%
%

\documentclass[envcountsame,envcountsect,envcountreset,orivec,runningheads,a4paper,final]{llncs}
\sloppy
\usepackage[T1]{fontenc}
\usepackage[utf8]{inputenc}

%
%
\usepackage{ifthen}
\newboolean{full}
\setboolean{full}{true}

\usepackage{ifdraft}

\usepackage{chngcntr} 

\spnewtheorem{thm}[theorem]{Theorem}{\bfseries}{\itshape}
\spnewtheorem{cor}[theorem]{Corollary}{\bfseries}{\itshape}
\spnewtheorem{cnj}[theorem]{Conjecture}{\bfseries}{\itshape}
\spnewtheorem{lem}[theorem]{Lemma}{\bfseries}{\itshape}
\spnewtheorem{lemdefn}[theorem]{Lemma and Definition}{\bfseries}{\itshape}
\spnewtheorem{prop}[theorem]{Proposition}{\bfseries}{\itshape}
\spnewtheorem{defn}[theorem]{Definition}{\bfseries}{\upshape}
\spnewtheorem{rem}[theorem]{Remark}{\bfseries}{\upshape}
\spnewtheorem{notation}[theorem]{Notation}{\bfseries}{\upshape}
\spnewtheorem{expl}[theorem]{Example}{\bfseries}{\upshape}
\spnewtheorem{thmdefn}[theorem]{Theorem and Definition}{\bfseries}{\itshape}
\spnewtheorem{propdefn}[theorem]{Proposition and Definition}{\bfseries}{\itshape}
\spnewtheorem{assumption}[theorem]{Assumption}{\bfseries}{\upshape}
\spnewtheorem{algorithm}[theorem]{Algorithm}{\bfseries}{\upshape}

 \renewenvironment{corollary}{\begin{cor}}{\end{cor}}

 \renewenvironment{remark}{\begin{rem}}{\end{rem}}
 \renewenvironment{example}{\begin{expl}}{\end{expl}}

\usepackage{lmodern}
\usepackage{microtype}

\usepackage{amssymb,amsmath,amsbsy} 
\usepackage{mathtools}
\usepackage{stmaryrd}
\SetSymbolFont{stmry}{bold}{U}{stmry}{m}{n}
\usepackage{dsfont}
\usepackage{nicefrac}

\usepackage{xspace}
\usepackage{enumerate}

\usepackage{tikz}
\usetikzlibrary{cd,automata,positioning} 

\usepackage{graphicx}

\newcounter{blubber}

\newenvironment{sparenumerate}
{\begin{list}
  {(\arabic{blubber})}
  {\usecounter{blubber}
   \setlength{\leftmargin}{0pt}
    \setlength{\parsep}{0pt}
    \setlength{\itemindent}{4ex}
    \setlength{\itemsep}{2pt}
  }
}
{\end{list}}

\iftrue
\ifdraft{
\usepackage[marginparwidth=2cm,nohead,nofoot,
            headsep = 1cm,
            footskip = 1cm,
            text={12.2cm,19.3cm},
            paperwidth=16.2cm,
            paperheight=23.3cm,
            marginparsep=1mm,
            marginparwidth=1.8cm,
            footskip=1cm,
            centering
            ]{geometry}
}{}
\fi

\usepackage[draft]{fixme}

\usepackage[draft=false,pdftex,hypertexnames=false,colorlinks]{hyperref} 
\usepackage{hypcap} 

\hypersetup{hidelinks}

\usepackage{seqsplit}
\usepackage{xstring}

\makeatletter
\ifdraft{%
\usepackage[notcite,notref]{showkeys}%

\renewcommand*\showkeyslabelformat[1]{%
\@ifundefined{hideNextShowKeysLabel}{%
\noexpandarg%
\StrSubstitute{#1}{ }{\textvisiblespace}[\TEMP]%
\parbox[t]{\marginparwidth}{\raggedright\normalfont\small\ttfamily\(\{\){\color{red!50!black}\expandafter\seqsplit\expandafter{\TEMP}}\(\}\)}%
}{}
}

\hypersetup{hidelinks}
}{}
\makeatother

\setcounter{tocdepth}{2}

\counterwithin*{equation}{section}
\counterwithin*{equation}{subsection}

\let\doendproof\endproof
\renewcommand\endproof{~\hfill\qed\doendproof}

\fxuselayouts{inline,nomargin,marginclue}
\fxusetheme{color}
\FXRegisterAuthor{ls}{als}{LS}
\FXRegisterAuthor{sm}{asm}{SM}
\FXRegisterAuthor{ud}{aud}{UD}
\FXRegisterAuthor{tw}{atw}{TW}
\FXRegisterAuthor{rf}{arf}{Referee}

\tikzcdset{
  arrow style=tikz,
  diagrams={>=Straight Barb,column sep=huge,row sep=huge}
}


\newcommand{\myparagraph}[1]{\medskip\par\noindent\textbf{\textsf{#1}}\hspace{6pt}}

\newcommand{\val}{\ensuremath{\kappa}\xspace}
\newcommand{\sem}[2][]{\ensuremath{\llbracket #2 \rrbracket_{#1}}\xspace}
\newcommand{\semv}[3]{\ensuremath{\llbracket #3 \rrbracket_{#1}^{#2}}\xspace}

\newcommand{\FA}{\ensuremath{\mathfrak A}\xspace}
\newcommand{\FB}{\ensuremath{\mathfrak B}\xspace}

\newcommand{\pow}{\ensuremath{\mathcal P}\xspace}
\newcommand{\copow}{\ensuremath{\mathcal Q}\xspace}
\newcommand{\fpow}{\ensuremath{\mathcal P_\omega}\xspace}
\newcommand{\dist}{\ensuremath{\mathcal D}\xspace}
\newcommand{\mon}{\ensuremath{\mathcal M}\xspace}
\newcommand{\fmon}{\ensuremath{\mathcal M_\omega}\xspace}
\newcommand{\B}{\ensuremath{\mathcal B}\xspace}
\newcommand{\Reals}{\ensuremath{\mathbb R}\xspace}

\newcommand{\fixvars}{\ensuremath{\mathsf V}\xspace}
\newcommand{\pliftset}{\ensuremath{\Lambda}\xspace}
\newcommand{\plift}{\ensuremath{\lambda}\xspace}
\newcommand{\modalset}{\ensuremath{\mathcal{L}}\xspace}
\newcommand{\modalsym}{\ensuremath{\textit{L}}\xspace}
\newcommand{\fpre}{\ensuremath{\alpha}\xspace}
\newcommand{\cfpre}{\ensuremath{\tau}\xspace}

\DeclareMathOperator{\ar}{ar}

\DeclareMathOperator{\inl}{inl}
\DeclareMathOperator{\inr}{inr}

\newcommand\irel[1]{\ensuremath{#1^\circ}\xspace}

\newcommand{\expr}{\ensuremath{\mathcal{E}}\xspace}
\newcommand{\gcexpr}{\ensuremath{\mathcal{E}_0}\xspace}
\newcommand{\gcemor}{\ensuremath{\varepsilon}}

\newcommand{\cat}[1]{\ensuremath{\mathbf{#1}}\xspace}
\newcommand{\epi}{\ensuremath{\twoheadrightarrow}}
\newcommand{\incl}{\ensuremath{\hookrightarrow}}

\numberwithin{equation}{section}
\def\op{\mathsf{op}}
\def\Set{\cat{Set}}
\newcommand{\takeout}[1]{\empty}
\newcommand{\modab}[3]{\langle #1, a.#2, b.#3\rangle}

\spnewtheorem{construction}[definition]{Construction}{\bfseries}{\rm}
\spnewtheorem{examples}[definition]{Examples}{\bfseries}{\rm}


\begin{document}

\mainmatter 

\title{Predicate Liftings and Functor Presentations in Coalgebraic Expression Languages\thanks{Work forms part of the DFG project COAX (MI 717/5-1 / SCHR 1118/11-1)}}
%
\titlerunning{Predicate Liftings and Functor Presentations}

\author{
  Ulrich Dorsch\orcidID{0000-0001-8678-7876}
  \and Stefan Milius\orcidID{0000-0002-2021-1644}
  \and \\Lutz Schr\"oder\orcidID{0000-0002-3146-5906}
  \and Thorsten Wi{\ss}mann\orcidID{0000-0001-8993-6486}
}
%
\authorrunning{U.~Dorsch, S.~Milius, L.~Schr\"oder, T.~Wi{\ss}mann}

\institute{Friedrich-Alexander-Universit\"at Erlangen-N\"urnberg, Erlangen, Germany 
\email{\{ulrich.dorsch,stefan.milius,lutz.schroeder,thorsten.wissmann\}@fau.de }
}

\maketitle


\begin{abstract}
  We introduce a generic expression language describing behaviours of
  finite coalgebras over sets; besides relational systems, this
  covers, e.g., weighted, probabilistic, and neighbourhood-based
  system types. We prove a generic Kleene-type theorem establishing a
  correspondence between our expressions and finite systems. Our
  expression language is similar to one introduced in previous work by
  Myers but has a semantics defined in terms of a particular form of
  predicate liftings as used in coalgebraic modal logic; in fact, our
  expressions can be regarded as a particular type of modal fixed
  point formulas. The predicate liftings in question are required to
  satisfy a natural preservation property; we show that this property
  holds in particular for the Moss liftings introduced by Marti and
  Venema in work on lax extensions.
\end{abstract}

\section{Introduction}

Expression languages that support the syntactic description of system
behaviour are one of the classical topics in computer science. The
prototypic example are regular expressions; further examples include
Kleene algebra with tests~\cite{Kozen97} and expression languages for
labelled transition systems~\cite{AcetoHennessy92}.

There has been recent interest in phrasing such expression languages
generically, obtaining their syntax and semantics as well as
meta-theoretic results including Kleene theorems by instantiation of a
parametrized framework. This is achieved by abstracting the type of
systems as coalgebras for a given type functor. This line of work
originates with expression languages for a specific class of functors
that essentially covers relational systems, so-called \emph{Kripke
  polynomial functors}~\cite{SilvaBR10}, and was subsequently extended
to cover also weighted systems~\cite{SilvaBBR11}. A generic expression
language for arbitrary finitary functors can be based on algebraic
functor presentations~\cite{M13}. Here, we introduce a similar and, as
it will turn out, in fact largely equivalent generic expression
language for finitary functors, which we base on coalgebraic
modalities in predicate lifting style, following the paradigm of
coalgebraic logic~\cite{CirsteaEA11}; on predicate liftings, we impose
strong conditions, notably including preservation of singletons. Marti
and Venema~\cite{MartiV15} have shown that for functors admitting a
\emph{lax extension} (in particular for functors that admit a
separating set of monotone predicate liftings), one can convert
operations from the functor presentation into predicate liftings, the
so-called \emph{Moss liftings}.  We show that the Moss liftings
preserve singletons; the converse does not hold in general, i.e.\ not
all singleton-preserving predicate liftings are Moss liftings under a
given lax extension.

  We thus arrive at a generic expression language that covers, e.g.,
  various flavours of relational, weighted, and probabilistic systems,
  as well as monotone neighbourhood systems as in the semantics of
  game logic~\cite{Parikh83} and concurrent dynamic
  logic~\cite{Peleg87}. We prove a Kleene theorem stating that every
  expression denotes the behavioural equivalence class of some state
  in a finite system, and that conversely every such behavioural
  equivalence class is denoted by some expression. 

  We make no claim to novelty for the design of a generic expression
  language as such, and in fact the expression language developed by
  Myers in his PhD dissertation~\cite{M13} appears to be even more
  general. In particular, unlike Myers' language our expression
  language is currently restricted to describing behavioural
  equivalence classes in set-based coalgebras, and does not yet
  support algebraic operations (e.g.~a join semilattice structure as
  in Silva et al.'s language for Kripke-polynomial
  functors~\cite{SilvaBR10} or in fact in standard regular
  expressions). The main point we are making is, in fact, a different
  one: we show that
  \begin{quote}
    \emph{coalgebraic expression languages embed into coalgebraic
      logic},
  \end{quote}
  specifically into (the conjunctive fragment of) the coalgebraic
  $\mu$-calculus~\cite{Cirstea11}, extending the classical result that
  every bisimilarity class of states in finite labelled transition
  systems is expressible by a \emph{characteristic formula} in the
  $\mu$-calculus~\cite{GrafSifakis86,GodskesenEA87,SteffenIngolfsdottir94,AcetoEA07}. This
  result provides a direct link between descriptions of processes and
  their property-oriented specification; as indicated above, the key
  to lifting it to a coalgebraic level of generality are
  singleton-preserving predicate liftings.

  \myparagraph{Related Work} As mentioned above, we owe much to work
  by Marti and Venema on Moss liftings~\cite{MartiV15}, and moreover
  we use a notion of
  \emph{$\Lambda$-bisimulation}~\cite{GorinSchroder13} that turns out
  to be an instance of their definition of bisimulation via lax
  extensions. Besides the mentioned work on generic expression
  languages for Kripke polynomial~\cite{SilvaBR10},
  weighted~\cite{SilvaBBR11}, and finitary~\cite{M13} functors, there
  is work on expression languages for \emph{reactive
    $T$-automata}~\cite{GoncharovEA14}, which introduce an orthogonal
  dimension of genericity: The coalgebra functor as such remains fixed
  but the computational capacities of the automaton model at hand are
  encapsulated as a computational
  monad~\cite{Moggi91}. Venema~\cite{Venema06} proves that for
  weak-pullback preserving fuctors, every bisimilarity class of finite
  coalgebras is expressible in coalgebraic fixpoint logic over Moss'
  $\nabla$ modality.

\section{Preliminaries}

\noindent In the standard paradigm of universal coalgebra, types of
state-based systems are encapsulated as endofunctors. We recall
details on presentations of set functors and on their
property-oriented description via predicate-lifting based coalgebraic
modalities.

\subsubsection*{Functor Presentations} describe set functors by
signatures of operations and a certain restricted form of equations,
so-called \emph{flat} equations, alternatively by a suitable natural
surjection. A \emph{signature} is a sequence
$\Sigma = (\Sigma_n)_{n \in \omega}$ of sets. Elements of $\Sigma_n$
are regarded as $n$-ary operation symbols (we write
$\cfpre/n \in \Sigma$ for $\cfpre \in \Sigma_n$). Every signature $\Sigma$
determines the corresponding polynomial endofunctor $T_\Sigma$ on
$\Set$, which maps a set $X$ to the set
\[
  T_\Sigma X = \coprod\limits_{n\in \omega} \Sigma_n \times X^n
\]
and similarly on maps.

\begin{defn}\label{def:fnpre}
  A \emph{presentation} of a functor $T: \Set \to \Set$ is a pair
  $(\Sigma, \fpre)$ consisting of a signature $\Sigma$ and a natural
  transformation $\fpre : T_\Sigma \epi T$ with surjective components
  $\fpre_X$.  In the following, we abuse notation and denote, for every
  $\cfpre/n \in \Sigma$, the corresponding coproduct component of
  $\fpre: T_\Sigma \epi T$ again by $\cfpre: (-)^n \to T$, and refer to
  it as an \emph{operation} of $T$.
\end{defn}

\noindent Most of our results concern finitary set functors. Recall
that a functor is \emph{finitary} if it preserves filtered
colimits. Over $\Set$, we have the following equivalent
characterizations:
\begin{thm}[Ad\'amek and Trnkova~\cite{AdamekT90}]\label{thm:at}
  Let $T: \Set \to \Set$ be a functor. Then the following are
  equivalent:
  \begin{enumerate}
  \item $T$ is finitary;
  \item $T$ is \emph{bounded}, i.e.~for every element $x \in TX$ there
    exists a finite subset $m: Y \incl X$ and an element $y \in TY$
    such that $x = Tm(y)$;
  \item $T$ has a presentation. 
  \end{enumerate} 
\end{thm}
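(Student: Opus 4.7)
The plan is to prove the cycle of implications $(1)\Rightarrow(2)\Rightarrow(3)\Rightarrow(1)$.

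For $(1)\Rightarrow(2)$: Every set $X$ is the filtered colimit of the diagram of its finite subsets $m\colon Y \incl X$ under inclusion, with colimit injections $m$ themselves. If $T$ preserves filtered colimits, then $TX$ is the colimit of the $TY$ for finite $Y \subseteq X$ with injections $Tm$. Since filtered colimits in $\Set$ are computed as quotients of disjoint unions, every $x \in TX$ is of the form $Tm(y)$ for some finite $m\colon Y \incl X$ and $y \in TY$, which is exactly boundedness.

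For $(2)\Rightarrow(3)$: I would take as signature $\Sigma_n = Tn$, where $n = \{0,\dots,n-1\}$ denotes the finite ordinal. For each $n$-tuple $(x_1,\dots,x_n) \in X^n$ let $\hat{x}\colon n \to X$ be the corresponding map, and define
\[
  \fpre_X \colon T_\Sigma X \to TX, \qquad
  \fpre_X(\cfpre,(x_1,\dots,x_n)) = T\hat{x}(\cfpre)
\]
for $\cfpre \in \Sigma_n = Tn$. Naturality in $X$ is a short diagram chase using functoriality of $T$ on the relevant reindexing maps. Surjectivity of each $\fpre_X$ is where boundedness enters: given $x \in TX$, pick $m\colon Y \incl X$ finite and $y \in TY$ with $x = Tm(y)$, choose a bijection $Y \cong n$, transport $y$ to some $\cfpre \in Tn = \Sigma_n$, and read off the tuple in $X^n$ from $m$ composed with the bijection; the image under $\fpre_X$ is $x$.

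For $(3)\Rightarrow(1)$: Polynomial functors $T_\Sigma$ are finitary, since coproducts commute with all colimits and finite products commute with filtered colimits in $\Set$. Given a presentation $\fpre\colon T_\Sigma \epi T$, I would argue that $T$, as a pointwise surjective quotient of the finitary functor $T_\Sigma$, is itself finitary: for a filtered diagram $D$ in $\Set$ with colimit cocone $(c_i\colon D_i \to C)$, the cocone $(Tc_i)$ factors through the canonical comparison map $k\colon \mathrm{colim}\,TD_i \to TC$, and one shows $k$ is bijective by using the surjections $\fpre_{D_i}$ and $\fpre_C$ together with the corresponding comparison isomorphism for $T_\Sigma$; surjectivity of $k$ is immediate from surjectivity of $\fpre_C$, and injectivity follows from a standard argument showing that identifications in $TC$ coming from $\fpre_C$ already occur at some stage $D_i$.

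The main obstacle I expect is the implication $(3)\Rightarrow(1)$: proving that a pointwise-surjective natural quotient of a finitary functor is finitary requires a careful zigzag/cofinality argument in the filtered colimit, rather than a pure formal manipulation. The other two implications are straightforward, with $(2)\Rightarrow(3)$ amounting mostly to unravelling definitions once the right choice $\Sigma_n = Tn$ is made.
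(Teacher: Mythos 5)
Your proposal is correct and follows essentially the same route as the paper, which only sketches this cited result of Ad\'amek and Trnkov\'a: the same signature $\Sigma_n = Tn$ with $\alpha_X(\tau,t) = Tt(\tau)$ for the forward direction, and the same appeal to finitarity of polynomial functors plus closure of finitary functors under pointwise-surjective quotients for $(3)\Rightarrow(1)$. Your explicit routing through boundedness $(2)$ just makes precise where surjectivity of $\alpha_X$ comes from, which the paper leaves implicit in its remark that surjectivity is ``easy to show''.
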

Indeed, for the equivalence of (1) and (3) note that every polynomial
functor~$T_\Sigma$ is finitary, and finitary functors are closed under
taking quotient functors. Conversely, given a finitary functor $T$,
let $\Sigma_n = Tn$ and define $\fpre_X: T_\Sigma X \to TX$ by
$\fpre_X(\cfpre, t) = Tt(\cfpre)$, where $t \in X^n$ is considered as a
function $n \to X$. It is easy to show that this yields a natural
transformation with surjective components.

\begin{rem}
  As indicated above, the natural surjection $\alpha$ in a functor
  presentation $(\Sigma,\alpha)$ can be replaced with a set of flat
  equations over~$\Sigma$, where an equation is called \emph{flat} if
  both sides consist of an operation symbol applied to
  variables~\cite{AdamekT90}. Incidentally, this (standard) term
  should not be confused with the same term introduced in the context
  of our expression language in Section~\ref{sec:expr}.
\end{rem}
\begin{expl}\label{ex:pres}
  \begin{sparenumerate}
  \item Let $A$ be an input alphabet. The functor $TX = 2 \times X^A$,
    whose coalgebras are deterministic automata, is polynomial, and
    finitary if~$A$ is finite. Thus, $T$ has a presentation
    $(\Sigma,\alpha)$ by a signature $\Sigma$ with two $|A|$-ary
    operations and no equations, i.e.~$\alpha$ is the natural
    isomorphism $T_\Sigma \cong 2 \times (-)^A$.
  \item For a commutative monoid $(M,+,0_M)$ the monoid-valued functor
    $M^{(-)}:\Set\to\Set$ is defined by
    \[
      M^{(X)} = \{\mu: X\to M\mid \mu(x) = 0_M
      \text{ for all but finitely many }x\in X\}
    \]
    and by $M^{(h)}(\mu) = y \mapsto \sum_{h(x) = y} \mu(x)$ on maps
    $h: X\to Y$.  We view elements of $M^{(X)}$ as finitely supported
    additive measures on~$X$, and in particular write
    $\mu(A)=\sum_{x\in A}\mu(x)$ for $A\subseteq X$; in this view,
    maps $M^{(h)}$ just take image measures. For a set $G\subseteq M$
    of generators (i.e.\@~there exists a surjective monoid morphism
    $G^*\epi M$), $M^{(-)}$ is represented by
      \[
      \alpha_X: \coprod_{n\in \omega} G^n\times X^n
    \twoheadrightarrow M^{(X)}, \qquad
    \alpha_X (\cfpre, t) = M^{(t)}(\cfpre),
  \]
  where $\cfpre \in G^n$ is considered as an element of $M^{(n)}$.
\item The finite powerset functor $\fpow$ (with $\fpow(X)$ being the
  set of finite subsets of~$X$) is the monoid-valued functor for the
  monoid $(\{0,1\},\vee,0)$. Since this is generated by $G=\{1\}$, we
  have one $n$-ary operation symbol for each $n\in \omega$:
    \[
    \fpre_X : \coprod_{n\in \omega} X^n \epi \pow_\omega X, \quad \quad
    \fpre_X (x_1, \ldots, x_n) = \{ x_1, \ldots x_n \};
    \]
    e.g.~$\alpha$ identifies the tuples  $(x_1, x_1, x_2)$ and $(x_1, x_2)$.
  \item For the monoid $\mathds{N}$ of natural numbers with addition,
    one obtains the bag functor $\B$ as
    $\mathds{N}^{(-)}$. Concretely, $\B$ maps a set $X$ to the set
    $\B X$ of bags (i.e.\@~finite multisets) on $X$. Since
    $(\mathds{N},+,0)$ is generated by $G=\{1\}$, we have the same
    signature as for $\fpow$, namely one $n$-ary operation symbol per
    $n\in \omega$; of course, the presentation $\alpha$ now identifies
    fewer tuples, e.g.\@ distinguishes $(x_1,x_2,x_1)$ and $(x_1,x_2)$.
  \item The \emph{finite distribution functor}~$\dist$ is a subfunctor of
    the monoid-valued functor $\Reals_{\ge 0}^{(-)}$ for the additive
    monoid of the non-negative reals, given by
    $\dist X=\{\mu\in\Reals_{\ge 0}^{(-)}\mid \sum_{x\in
      X}\mu(x)=1\}$. Note that elements of $\dist X$ can be represented
    as formal convex combinations $\sum_{i=1}^n p_ix_i$,
    $p_i \in \Reals_{\ge 0}, x_i \in X$ for $i = 1, \ldots, n$, with
    $p_1 + \cdots + p_n = 1$. Taking $\Reals_{\ge 0}$ itself as the set
    of generators and restricting to $\dist$, we obtain a presentation
    $(\Sigma, \alpha)$ with an $n$-ary operation symbol for each
    $n$-tuple $(p_1,\dots,p_n)\in\Reals_{\ge 0}^n$ such that
    $p_1+\cdots+p_n=1$, and $\alpha_X$ maps
    $((p_1, \ldots, p_n), (x_1, \ldots, x_n))$ to the formal convex
    combination $\sum_{i=1}^n p_ix_i$.
  \item The \emph{finitary monotone neighbourhood functor} $\fmon$,
    i.e.\ the finitary part of the standard monotone neighbourhood
    functor $\mon$, can be described as follows. To begin, $\mon$ is
    the subfunctor of the double contravariant powerset functor
    $\copow\copow^{\op}$ given on objects by
    \begin{equation*}
      \mon X=\{\FA\subseteq\copow(X)\mid 
      \FA\text{ upwards closed under $\subseteq$}\}.
    \end{equation*}
    We can then describe $\fmon X$ as consisting of all
    $\FA\in\mon X$ having finitely many minimal elements, all of
    them finite, such that every element of \FA is above a
    minimal one. We have the following presentation of~$\fmon$: For
    every choice of numbers $n\ge 0$, $k_1,\dots,k_n\ge 0$, we have a
    $\sum_{i=1}^n k_i$-ary operation mapping
    $(x_{ij})_{i=1,\dots,n;j=1,\dots,k_i}$ to the upwards closure of
    the set system
    \begin{equation*}
      \{\{x_{i1},\dots,x_{ik_i}\}\mid i=1,\dots,n\}.
    \end{equation*}
  \end{sparenumerate}
\end{expl}

\subsubsection*{Coalgebraic Logic} Since coalgebras serve as generic
models of reactive systems, it is natural to specify properties of
coalgebras in terms of suitable modalities. The semantics of
coalgebraic modalities can be defined using \emph{predicate
  liftings}~\cite{pattinson2003coalgebraic,Schroder08}, which specify
how a predicate on a base set~$X$ induces a predicate on the set~$TX$
where~$T$ is the coalgebraic type functor:

\begin{defn}\label{def:plift}
  For $n\in\omega$ an $n$-ary \emph{predicate lifting} for a functor $T: \Set
  \rightarrow \Set$ is a natural transformation
  \[
    \plift : \copow{}^n \rightarrow \copow{T^{\op}}
  \]
  where $\copow{} : \mathbf{Set}^{\op} \rightarrow \mathbf{Set}$ is
  the contravariant powerset functor, with $\copow f$ taking
  preimages, i.e.
  \begin{equation*}
    \copow f (A) = f^{-1}[A].
  \end{equation*}
  We write $\plift/n$ to indicate that $\plift$
  has arity~$n$.  A predicate lifting $\plift$ is \emph{monotone} if
  it preserves set inclusion in every argument.  A set $\pliftset$ of
  predicate liftings is
  \emph{separating}~\cite{Pattinson04,Schroder08} if every $t\in TX$
  is uniquely determined by the set
  \[
    \mathcal{T}_\pliftset(t) = \{ (\plift, A_1, \ldots, A_n) \mid \plift/n\in\pliftset,
      A_i\in \copow X \text{ and } t\in\plift_X(A_1, \ldots,A_n) \}.
  \]
\end{defn}
\begin{expl}\label{expl:pl}
  The basic example is the interpretation of the standard box modality
  $\Box$ over the covariant powerset functor $\pow$ (with $\pow f$
  taking direct images), given by the monotone unary predicate
  lifting~$\plift$ defined by
  \begin{equation*}
    \plift_X(A)=\{B\in\pow(X)\mid B\subseteq A\}.
  \end{equation*}
  For a further monotone example, we interpret the box modality over
  the monotone neighbourhood functor~$\mon$ (Example~\ref{ex:pres})
  by the monotone unary predicate lifting
  \begin{equation*}
    \plift_X(A)=\{\FA\in\mon X\mid A\in\FA\}.
  \end{equation*}
  It is easy to see that in both these examples, the predicate lifting
  for $\Box$ alone is separating.
\end{expl}
\noindent Predicate-lifting-based modalities can be embedded into
\emph{coalgebraic logics} of varying degrees of expressiveness. Our
expression language introduced in Section~\ref{sec:expr} will live
inside the \emph{coalgebraic $\mu$-calculus}~\cite{Cirstea11}, more
precisely its \emph{conjunctive fragment}~\cite{GorinSchroder14}. We
defer details to Section~\ref{sec:expr}.

\section{Singleton-Preserving Predicate Liftings}
\noindent Our generic expression language will depend on a specific
type of predicate liftings, as well as on a strengthening of separation:
\begin{defn}
  An $n$-ary predicate lifting $\plift$ \emph{preserves singletons} if
  \[
     |\plift_X(\{x_1\},\dots,\{x_n\})| = 1 
  \]
  for all $x_1,\dots,x_n\in X$. Moreover, a set \pliftset of predicate
  liftings is \emph{strongly expressive} if for every $t \in TX$ there
  exist $\plift/n\in\pliftset$ and $x_1, \ldots, x_n \in X$ such that
  \begin{equation*}
    \{t\} = \plift_X(\{x_1\}, \ldots, \{x_n\}).
  \end{equation*}
\end{defn}
\noindent Singleton preservation will serve to ensure that
expressions of our language denote unique behaviours, while strong
expressivity will guarantee that all (finite) behaviours are
expressible. The following is immediate:
\begin{lem}\label{lem:sep}
  Every strongly expressive set of predicate liftings is separating.
\end{lem}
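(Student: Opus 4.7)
The plan is to prove the contrapositive-style implication directly from the definitions: assume strong expressivity and show that the map $t \mapsto \mathcal{T}_\pliftset(t)$ is injective.

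First, I would take two elements $t, t' \in TX$ with $\mathcal{T}_\pliftset(t) = \mathcal{T}_\pliftset(t')$, and aim to derive $t = t'$. By strong expressivity applied to $t$, I obtain $\plift/n \in \pliftset$ and $x_1,\dots,x_n \in X$ such that $\{t\} = \plift_X(\{x_1\},\dots,\{x_n\})$. In particular $t \in \plift_X(\{x_1\},\dots,\{x_n\})$, so the tuple $(\plift, \{x_1\},\dots,\{x_n\})$ lies in $\mathcal{T}_\pliftset(t)$.

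Then I invoke the hypothesis $\mathcal{T}_\pliftset(t) = \mathcal{T}_\pliftset(t')$ to conclude that the same tuple lies in $\mathcal{T}_\pliftset(t')$, i.e.\ $t' \in \plift_X(\{x_1\},\dots,\{x_n\}) = \{t\}$. Hence $t' = t$, establishing separation.

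There is essentially no obstacle here: the argument is a one-line unfolding of the definitions, and I expect the proof in the paper to be little more than the observation that a singleton witness for $t$ provided by strong expressivity immediately distinguishes $t$ from every $t' \neq t$ via the set $\mathcal{T}_\pliftset$. Note that the argument does not require the lifting $\plift$ witnessing strong expressivity for $t$ to preserve singletons in general; only the specific equality $\{t\} = \plift_X(\{x_1\},\dots,\{x_n\})$ is used.
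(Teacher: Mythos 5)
Your proof is correct and is exactly the one-line unfolding of the definitions that the paper has in mind (it states the lemma as ``immediate'' and gives no explicit proof): the singleton witness $\{t\}=\plift_X(\{x_1\},\dots,\{x_n\})$ provided by strong expressivity puts the tuple $(\plift,\{x_1\},\dots,\{x_n\})$ into $\mathcal{T}_\pliftset(t)$, and any $t'$ with the same $\mathcal{T}_\pliftset$-set must then lie in that singleton. Your closing remark that singleton preservation of $\plift$ in general is not needed is also accurate.
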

\begin{expl}\label{expl:sp-pl}
  The predicate liftings in Example~\ref{expl:pl} both fail to
  preserve singletons. Our main source of singleton-preserving
  predicate liftings are Moss liftings as introduced in general terms
  in the next section. For the finite powerset functor~$\fpow$
  consider the predicate liftings $\plift^n/n$ given by
  \begin{equation}\label{eq:moss-pfin}\textstyle
    \begin{array}{r@{}l}
      \plift^n_X(A_1,\ldots,A_n)=\{
      B\in\fpow X\mid\, & B \subseteq\bigcup_{i=1}^nA_i \text{ and }\\
      & B\cap A_i\neq\emptyset\text{ for $i=1,\ldots,n$}\}
    \end{array}
  \end{equation}
  (which can be seen as arising from the above lifting for $\Box$ by
  Boolean combination). Then
  $\plift^n_X(\{x_1\},\dots,\{x_n\})=\{\{x_1,\dots,x_n\}\}$ for
  $x_1,\dots,x_n\in X$, which shows that the $\plift^n$ preserve
  singletons and that the set $\{\plift^n\mid n\in\omega\}$ is strongly
  expressive.
\end{expl}

\begin{rem}
  Singleton-preserving predicate liftings should not be confused with
  Kurz and Leal's \emph{singleton
    liftings}~\cite{leal2008predicate,kurz2009equational}. The
  definition of the latter is based on the one-to-one correspondence
  between subsets of $T( 2^n )$ and $n$-ary predicate liftings for
  $T$~\cite{Schroder08}, which maps an $n$-ary predicate lifting
  $\plift$ to
  $\plift_{2^n}( \pi_1^{-1}(\{\top\}),\ldots,\pi_n^{-1}(\{\top\}) )
  \subseteq T( 2^n )$,
  and $C\subseteq T(2^n)$ to the lifting $\plift$ defined by
  $\plift_{X}( A_1,\ldots,A_n ) = \{ t\in TX\mid
  T\langle\chi_{A_1},\ldots,\chi_{A_n}\rangle(t) \in C\}$,
  where $\pi_i: 2^n \to 2$ is the $i$-th projection and
  $\chi_A: X \to 2$ denotes the characteristic function of
  $A\subseteq X$.  An $n$-ary predicate lifting is a \emph{singleton
    lifting} if it corresponds to a singleton subset of $T(2^n )$.

  It is then indeed immediate that every \emph{unary}
  singleton-preserving predicate lifting $\plift$ is a singleton
  lifting, since the above correspondence maps $\plift$ to the
  singleton $\plift_2( \{ \top \} )$. The following examples show that
  this implication breaks down at higher arities, and that the
  converse also fails in general.
\end{rem}
\begin{expl}
  \begin{sparenumerate}
  \item The unary singleton lifting for $\pow$ corresponding to
    $\{\{\bot\}\}\subseteq\pow 2$ fails to preserve singletons. Of
    course, this lifting fails to be monotone.
  \item Binary monotone singleton liftings need not preserve
    singletons. E.g.~for the distribution functor $\dist$, the monotone
    singleton lifting $\plift$ corresponding to
    $\{ 1\cdot ( \top,\top) \}\subseteq\dist( 2^2 )$ is given by
    $\plift( A, B ) = \{ \mu \mid \mu(A) = \mu(B) = 1 \} $, so
    $\plift(\{ x \},\{ y \}) = \varnothing$ for $x\neq y$. We leave it
    as an open question whether unary monotone singleton liftings
    preserve singletons.
    \item The binary singleton-preserving predicate lifting 
      \[ \plift( A,B ) = \{ \mu
      \mid \mu(A) \geq 1/2, \mu(B)\geq1/2,\mu(A\cup B) = 1
      \}
      \]
      for the distribution functor \dist (see Example~\ref{ex:Moss}
      for details) is not a singleton lifting, as it corresponds to the
      following infinite subset of $\dist( 2^2 )$:
      \[
        \{ \mu \mid \mu( 2\times \{\top\} )\geq
          1/2, \mu(\{\top\}\times 2)\geq1/2,\mu(
          2\times\{\top\}\cup\{\top\}\times 2 ) = 1\}.
      \]
  \end{sparenumerate}
\end{expl}

\noindent It is not hard to see that we can recover operations for a
functor from \emph{monotone} singleton preserving predicate
liftings; in detail:
\begin{lem}\label{lem:pres}
  Let $T:\Set\to\Set$. Then the following hold.
  \begin{enumerate}
  \item\label{item:op-from-pl} For each monotone
    singleton-preserving predicate lifting $\plift/n$,
    \begin{equation}\label{eq:op-from-pl}
      \{\cfpre_{\plift, X}( x_1, \ldots, x_n )\} :=
      \plift_X( \{ x_1 \},\ldots,\{ x_n \}
    )
  \end{equation}
  defines a natural transformation $\cfpre_\plift : (-)^n \to T$.
\item\label{item:pres-from-pls} If $\pliftset$ is a strongly expressive
  set of monotone singleton-preserving predicate liftings, then
  taking operation symbols $\cfpre_\plift$ for each
  $\plift\in\pliftset$, with associated interpretation as
  per~\eqref{eq:op-from-pl}, yields a functor presentation of~$T$.
  \end{enumerate}
\end{lem}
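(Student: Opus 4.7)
The plan is to handle the two parts in order, with the interesting work concentrated in the naturality argument of part~\ref{item:op-from-pl}. For part~\ref{item:op-from-pl}, well-definedness of $\cfpre_{\plift,X}$ is immediate from singleton preservation: the set $\plift_X(\{x_1\},\dots,\{x_n\})$ has exactly one element, which we baptise $\cfpre_{\plift,X}(x_1,\dots,x_n)$. The substantive step is naturality, i.e.\ that for every $f:X\to Y$,
\[
  Tf\bigl(\cfpre_{\plift,X}(x_1,\dots,x_n)\bigr) = \cfpre_{\plift,Y}\bigl(f(x_1),\dots,f(x_n)\bigr).
\]
Here neither naturality nor monotonicity alone suffices: naturality of $\plift$ transports \emph{preimages}, while we want to transport images of singletons. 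The trick I would use is to move to the preimage $f^{-1}\{f(x_i)\}$, which contains $\{x_i\}$ but is generally larger. By monotonicity of $\plift$,
\[
  \{\cfpre_{\plift,X}(x_1,\dots,x_n)\} = \plift_X(\{x_1\},\dots,\{x_n\}) \subseteq \plift_X\bigl(f^{-1}\{f(x_1)\},\dots,f^{-1}\{f(x_n)\}\bigr),
\]
and by naturality of $\plift$ the right-hand side equals $(Tf)^{-1}\bigl(\plift_Y(\{f(x_1)\},\dots,\{f(x_n)\})\bigr)$. Hence $Tf(\cfpre_{\plift,X}(x_1,\dots,x_n))$ lies in the singleton $\plift_Y(\{f(x_1)\},\dots,\{f(x_n)\})=\{\cfpre_{\plift,Y}(f(x_1),\dots,f(x_n))\}$, yielding the required equality.

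For part~\ref{item:pres-from-pls}, I would define the signature $\Sigma$ to have an $n$-ary symbol $\cfpre_\plift$ for every $\plift/n\in\pliftset$ and let $\fpre:T_\Sigma\epi T$ collect the components $(\cfpre_\plift)_X$ from part~\ref{item:op-from-pl}, i.e.\
\[
  \fpre_X(\cfpre_\plift,(x_1,\dots,x_n)) = \cfpre_{\plift,X}(x_1,\dots,x_n).
\]
Naturality of $\fpre$ reduces to the naturality of each $\cfpre_\plift$ already established, together with the naturality of $T_\Sigma$ on its coproduct components. Surjectivity of $\fpre_X$ is exactly the content of strong expressivity: given $t\in TX$, there exist $\plift/n\in\pliftset$ and $x_1,\dots,x_n\in X$ with $\{t\}=\plift_X(\{x_1\},\dots,\{x_n\})=\{\cfpre_{\plift,X}(x_1,\dots,x_n)\}$, so $t=\fpre_X(\cfpre_\plift,(x_1,\dots,x_n))$.

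The only genuinely delicate point is the naturality computation in part~\ref{item:op-from-pl}; everything else is formal once that is in place. The crucial observation is that monotonicity is what bridges the gap between what naturality of $\plift$ gives (preimage-compatibility) and what we need (image-compatibility on singletons), and that singleton preservation forces the containment established there to be an equality of singletons. This is also the only place where monotonicity enters the argument.
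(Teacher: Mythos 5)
Your proof is correct and follows essentially the same route as the paper's: the same chain $\plift_X(\{x_1\},\dots,\{x_n\})\subseteq\plift_X(f^{-1}[\{f(x_1)\}],\dots,f^{-1}[\{f(x_n)\}])=(Tf)^{-1}[\plift_Y(\{f(x_1)\},\dots,\{f(x_n)\})]$ via monotonicity and naturality, concluded by singleton preservation, and the same reading of strong expressivity as componentwise surjectivity for part~(2). No gaps.
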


\begin{expl}\label{expl:moss}
  The singleton-preserving predicate liftings $\plift^n$ from
  Example~\ref{expl:pl} induce, according to the above construction,
  the operations $X^n\to\fpow(X)$,
  $(x_1,\dots,x_n)\mapsto\{x_1,\dots,x_n\}$.
\end{expl}
\noindent The other direction, generating predicate liftings from
functor presentations, is more involved, and treated next.


\section{Moss Liftings}

Marti and Venema~\cite{MartiV15} introduce \emph{Moss liftings},
predicate liftings that are constructed from functor presentations
with the help of a generalized form of the nabla operator, extending
an earlier construction for weak-pullback preserving functors by Kurz
and Leal~\cite{kurz2009equational}.  Recall that for a
weak-pullback-preserving functor~$T$, Moss'~\cite{Moss99} classical
nabla operator $\nabla : T\copow \Rightarrow \copow T^\op$ is the
natural transformation defined by
\[
  \nabla (\Phi) = \{ t\in TX \mid ( t, \Phi ) \in \overline
  T(\in_X)\}.
\]
Here, $\mathord{\in_X} \subseteq X\times \copow X$ is the element-of
relation for $X$, and $\overline T$ is the \emph{Barr extension}
of~$T$, viz.~the functor $\overline T$ on the category of sets and
relations defined on a relation $R \subseteq X \times Y$ by
$\overline T R = \{ (T\pi_1 (r), T\pi_2 (r)) \mid r\in TR
\}$, where $\pi_1: R \to X$ and $\pi_2: R \to Y$ are the
projection maps (cf.~\cite{Moss99}). Barr~\cite{Barr70} (see
also Trnkov\'a~\cite{Trnkova80}) proved that $\overline T$ is a functor if and
only if $T$ preserves weak pullbacks.

Further recall that the \emph{converse} of a relation
$R \subseteq X \times Y$ is the relation
$\irel R = \{(y,x) \mid x \mathbin R y \}\subseteq Y \times X$. We
denote the composite of two relations $R \subseteq X \times Y$ and
$S \subseteq Y \times Z$ diagrammatically by
$R;S \subseteq X \times Z$. Also, for $A\subseteq X$ we denote by
$R[A]\subseteq Y$ the relational image
$R[A]=\{y\mid\exists x\in A.\,xRy\}$. The construction
$T\mapsto\overline T$ is generalized and abstracted in the notions of
\emph{relation lifting} and, more specifically, \emph{lax extension}
of a functor, as recalled next.

\begin{defn}[Relation lifting, lax extension~\cite{MartiV15}]
  \label{def:relation-lifting}
  A \emph{relation lifting} $L$ for a functor $T$ is an assignment
  mapping every relation $R \subseteq X\times Y$ to a relation
  $LR \subseteq TX\times TY$ such that converses are preserved:
  $L(\irel S) = \irel{(LS)}$.  A relation lifting~$L$ is a \emph{lax
    extension} if for all relations $R, R' \subseteq X\times Z$,
  $S\subseteq Z\times Y$ and functions $f:X \rightarrow Z$
  (identified with their graph relation) the following
  hold:
  \begin{gather*}
  R' \subseteq R \Rightarrow LR' \subseteq LR,\\
  LR;LS \subseteq L( R;S ),\\
  Tf \subseteq Lf.
  \end{gather*}
  A lax extension $L$ \emph{preserves diagonals} if for all sets~$X$
  \[
    L\Delta_X \subseteq \Delta_{TX}.
  \]
\end{defn}

\begin{prop}[Properties of Lax Extensions \cite{MartiV15}]
  \label{prop:laxprop}
  Let $L$ be a lax extension for a functor $T$. Then for all functions
  $f:X\rightarrow Z$, $g:Y\rightarrow Z$ and relations $R \subseteq X\times Z$,
  $S \subseteq Z\times Y$, 
  \begin{itemize}
    \item[i)] $\Delta_{TX} \subseteq L\Delta_X$,
    \item[ii)] $Tf;LS = L( f;S )$ and $LR;\irel{( Tg )} = L( R;\irel g
    )$,
  \end{itemize}
  and if $L$ preserves diagonals, then
  \begin{itemize}
    \item[iii)] $\Delta_{TX} = L\Delta_X$ and $Tf = Lf$,
    \item[iv)] $Tf;\irel{( Tg )} = L( f;\irel g )$.
  \end{itemize}
\end{prop}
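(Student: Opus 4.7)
The plan is to derive all four parts from three ingredients: the lax-extension axioms (monotonicity, compositional laxity $LR;LS \subseteq L(R;S)$, and $Tf \subseteq Lf$ for any function $f$), the converse-preservation axiom $L(\irel R) = \irel{(LR)}$ that is built into the definition of a relation lifting, and the elementary set-theoretic identities $\Delta_X \subseteq f;\irel f$ and $\irel f;f \subseteq \Delta_Z$ that hold for every function $f\colon X\to Z$.

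Part~(i) is immediate: take $f = \mathrm{id}_X$ in $Tf \subseteq Lf$, so that $\Delta_{TX} = T\mathrm{id}_X \subseteq L\mathrm{id}_X = L\Delta_X$.

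For part~(ii), the inclusion $Tf;LS \subseteq L(f;S)$ is routine from $Tf \subseteq Lf$ and compositional laxity. The main obstacle will be the reverse inclusion $L(f;S) \subseteq Tf;LS$, because the axioms do not directly allow factoring $Tf$ out of $L(f;S)$. My plan is to use converse preservation in order to shift the problem onto $\irel f$: from $\irel{(Tf)} \subseteq \irel{(Lf)} = L(\irel f)$ together with $\irel f;f \subseteq \Delta_Z$, monotonicity and compositional laxity yield $L(\irel f);L(f;S) \subseteq L(\irel f;f;S) \subseteq LS$. Concretely, for any $(t,s) \in L(f;S)$ the pair $(Tf(t),t)$ lies in $\irel{(Tf)} \subseteq L(\irel f)$, so composing with $(t,s)$ gives $(Tf(t),s) \in LS$, i.e.\ $(t,s) \in Tf;LS$ as required. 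The second identity of~(ii) follows by taking converses of both sides and reusing the first, since $\irel{(LR;\irel{(Tg)})} = Tg;L(\irel R)$ and $\irel{(L(R;\irel g))} = L(g;\irel R)$.

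For parts~(iii) and~(iv), the decisive step is to upgrade $Tf \subseteq Lf$ to an equality once $L$ preserves diagonals. The plan is to combine~(i) with the hypothesis $L\Delta_X \subseteq \Delta_{TX}$ for the first half of~(iii), and then to specialise~(ii) at $S = \Delta_Z$ in order to obtain $Lf = L(f;\Delta_Z) = Tf;L\Delta_Z = Tf;\Delta_{TZ} = Tf$. Part~(iv) then reduces to~(ii): by~(iii) we have $\irel{(Tg)} = \irel{(Lg)} = L(\irel g)$, so $Tf;\irel{(Tg)} = Tf;L(\irel g) = L(f;\irel g)$.
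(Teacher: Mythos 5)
Your proof is correct. The paper itself does not prove this proposition --- it is imported verbatim from Marti and Venema \cite{MartiV15} --- so there is no in-paper argument to compare against; your derivation (in particular the key step for the nontrivial inclusion $L(f;S)\subseteq Tf;LS$, namely composing $(Tf(t),t)\in\irel{(Tf)}\subseteq L(\irel f)$ with $(t,s)\in L(f;S)$ and using $\irel f;f\subseteq\Delta_Z$ together with monotonicity and lax compositionality) is sound and is essentially the standard argument given in the cited source.
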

\noindent One use of relation liftings is to determine coalgebraic notions of
bisimulation:
\begin{defn}[$L$-Bisimulation \cite{MartiV15}]
  Let $L$ be a relation lifting for a functor $T:\Set \to \Set$, and let
  $( X,\xi )$, $( Y,\zeta )$ be $T$-coalgebras. A relation
  $S \subseteq X\times Y$ is an \emph{$L$-simulation} if for all
  $x\in X$ and $y\in Y$, 
  \begin{equation*}
    x\mathbin{S}y\quad\text{implies}\quad\xi(x) \mathbin{LS} \zeta(y).
  \end{equation*}
  An \emph{$L$-bisimulation} is a relation $S$ such that $S$ and
  $\irel S$ are $L$-simulations.  Two states are \emph{$L$-bisimilar}
  if there exists an $L$-bisimulation relating them.
\end{defn}
Marti and Venema~\cite[Theorem~11]{MartiV15} show that if~$L$ is a lax
extension that preserves diagonals, then $L$-bisimilarity coincides
with behavioural equivalence.

\begin{assumption}
  From now on we fix a finitary endofunctor $T:\Set \to \Set$ having a
  diagonal-preserving lax extension~$L$ and a presentation
  $(\Sigma, \fpre)$ of~$T$.
\end{assumption}
\noindent Another key feature of lax extensions is that they induce
canonical modalities, generalizing Moss' coalgebraic
logic~\cite{Moss99}:
\begin{defn}[Lax Nabla \cite{MartiV15}]
  The \emph{lax nabla} of $L$ is the  family of functions
  \begin{align*}
    \nabla^L_X : T\copow X & \rightarrow \copow T^\op X\\
    \Phi\quad &\mapsto \{ t\in TX \mid ( t, \Phi ) \in L(\in_X) \},
  \end{align*}
  where ${\in_X} \subseteq X\times \copow X$ is the element-of relation for $X$.
\end{defn}
As shown by Marti and Venema~\cite{MartiV15}, the lax nabla is in fact
a natural transformation $\nabla^L: T\copow \Rightarrow \copow T^\op$,
and coincides with Moss' classical $\nabla$ for $L$ being the Barr
extension of $T$ (and $T$ preserving weak pullbacks). In combination
with a functor presentation, the lax nabla gives rise to a family of
predicate liftings:

\begin{defn}[Moss Liftings \cite{MartiV15}]\label{def:Mosslift}
  Every operation symbol $\cfpre/n \in \Sigma$ yields a predicate
  lifting $\plift$ defined by
    \begin{equation*}
      \plift =
      (\copow^{n} \overset{\cfpre \copow}
      {\Longrightarrow} 
      T\copow
      \overset{\nabla^L}
      {\Longrightarrow}
      \copow T^\op),
    \end{equation*}
    that is,
    \begin{equation*}
      \plift_X( X_1, \ldots X_n) =
      \{ t\in TX \mid ( t, \cfpre_{\copow X}( X_1,\ldots,X_n ) )\in
      L(\in_X) \}.
    \end{equation*}
    These predicate liftings are called the \emph{Moss liftings} of
    $T$.
\end{defn}
\begin{expl}\label{ex:Moss}
  Some standard functor presentations are converted into Moss liftings
  as follows.
\begin{sparenumerate}
\item For the deterministic automata functor $TX = 2 \times X^A$
  consider the Barr extension $L = \overline T$. Then elements of
  $T\copow X$ are pairs $(b, (Y_a)_{a \in A})$, where each~$Y_a$ is a
  subset of $X$, and
  \[
    \nabla_X(b, (Y_a)_{a \in A}) 
    = 
    \{ (b, (x_a)_{a \in A}) \mid \forall a \in A : x_a \in Y_a\}
    \qquad \text{for $b = 0, 1$}.
  \] 
  The two Moss liftings
  $\plift^0, \plift^1: \copow^A \to \copow (2 \times (-)^A)$
  corresponding to the two $|A|$-ary operation symbols from the
  presentation in \autoref{ex:pres}.1 are thus defined (slightly
  abusing notation) by
  \[
    \plift^i((Y_a)_{a \in A}) = \{(i, (x_a)_{a \in A}) \mid \forall a
    \in A : x_a \in Y_a\} \qquad \text{for $i = 0,1$}.
  \]

\item As indicated in \autoref{ex:pres}, the finite powerset functor
  $\fpow$ has operations $\cfpre^n/n$ given by
  $\cfpre^n(x_1,\dots,x_n)=\{x_1,\dots,x_n\}$. The Moss lifting
  $\plift^n$ associated to $\cfpre^n$ when using the Barr extension is
  exactly the one given by~\eqref{eq:moss-pfin} above.
\item Recall from \autoref{ex:pres} that the operations of the finite
  distribution functor~$\dist$ take formal convex combinations. Via the
  Barr extension, such an operation, determined by coefficients
  $p_1,\dots,p_n$ such that $\sum p_i=1$, induces the predicate
  lifting $\plift$ given by $\plift_X(A_1,\dots,A_n)$ consisting of
  all $\mu\in\dist X$ such that there exists a distribution on $\in_X$ (a
  subset of $X\times\copow(X)$) whose marginal distributions are $\mu$
  (on $X$) and the distribution $\nu$ on $\copow(X)$ given by
  $\nu(\{A_i\})=p_i$, respectively. In fact, however, this description
  can be substantially simplified; e.g.\ one readily checks that in
  the case $n=2$, we actually have
  \[
    \plift(A_1,A_2)
    =
    \{\mu\in\dist(X)\mid \mu(A_1)\ge p_1,\mu(A_2)\ge p_2,\mu(A_1\cup A_2)=1\}.
  \]
  (The generalization to higher arities is via what is nowadays known
  as the \emph{splitting lemma}~\cite[Theorem~11]{Strassen65}.)
  \item For the finitary monotone neighbourhood functor~$\fmon$
  (Example~\ref{ex:pres}), we obtain Moss liftings as follows. Marti
  and Venema~\cite{MartiV15} define a diagonal-preserving lax
  extension~$L$ for $\mon$ (which, then, restricts to $\fmon$) by
  means of nested Egli-Milner liftings. An explicit description
  of~$L$ is
  \begin{equation*}
    LR = \{(\FA,\FB)\in\mon X\times\mon Y\mid
    \forall A\in\FA.\,R[A]\in\FB,
    \forall B\in\FB.\,\irel R[B]\in\FA\}
  \end{equation*}
  for $R\subseteq X\times Y$. In particular, for $\FA\in\mon X$ and
  $\Phi\in\mon\copow X\subseteq\copow\copow\copow X$, we have
  \begin{align*}
    \FA\in\nabla^L_X(\Phi)\quad\text{iff}\quad \FA\mathbin{L(\in)} \Phi\quad\text{iff}\quad&
    \textstyle\forall \beta\in\Phi.\,\bigcup\beta\in\FA\text{ and }\\
    &\forall A\in\FA.\,\{B\in\copow X\mid B\cap A\neq\emptyset\}\in\Phi.
  \end{align*}
  Combining $\nabla^L$ with the presentation of~$\fmon$
  (Example~\ref{ex:pres}) produces, for each choice of numbers
  $n\ge 0$ and $k_1,\dots,k_n\ge 0$, a $\sum_{i=1}^n k_i$-ary Moss
  lifting~$\plift$ given by
  \begin{align*}
    \plift((A_{ij})_{i=1,\dots,n;j=1,\dots,k_i})=\{\FA\in\fmon X\mid\,
    &\textstyle\forall i.\,\bigcup_j A_{ij}\in\FA\text{ and }\\
    &\forall B\in\FA.\,\exists i.\,\forall j.\,B\cap A_{ij}\neq\emptyset\}.
  \end{align*}
  Since $\fmon$ preserves finite sets and the box modality $\Box$ as
  described in Example~\ref{expl:pl} is separating, it is clear that
  the Moss liftings are expressible using~$\Box$ and Boolean
  operators. Concretely, this works as follows. For readability, we
  denote the predicate lifting interpreting $\Box$ by $\Box$ as well,
  similarly for the dual modality~$\Diamond$, so that
  $\Diamond_X(A):=\mon X\setminus\Box_X(X\setminus A)=\{\FA\in\mon X\mid
  \forall B\in\FA.\,B\cap A\neq\emptyset\}$.
  Then the Moss lifting~$\plift$ as described above can be written as
  \begin{equation*}\textstyle
    \plift((A_{ij}))=\bigcap_i \Box_X(\bigcup_j A_{ij})\cap
    \bigcap_\pi\Diamond_X(\bigcup_i A_{i\pi(i)})
  \end{equation*}
  where $\pi$ ranges over all selection functions assigning to each
  $i\in\{1,\dots,n\}$ an index $\pi(i)\in\{1,\dots,k_i\}$.

\end{sparenumerate}
\end{expl}

\noindent Moss liftings are always
monotone~\cite[Proposition~24]{MartiV15}. We show that they also
preserve singletons:

\begin{prop}
  \label{prop:laxdescrpres}
  Moss liftings preserve singletons. More specifically, let
  $\plift$ be the Moss lifting induced by $\cfpre/n \in \Sigma$. Then
  for all $x_1, \ldots, x_n \in X$,
  \[
    \plift_X( \{ x_1 \},\dots,\{ x_n \} )
    = \{ \cfpre_X( x_1,\dots,x_n) \}.
  \]
\end{prop}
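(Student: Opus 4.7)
The plan is to translate the defining condition of the Moss lifting at singleton arguments into relational identities involving the singleton map $\eta_X : X \to \copow X$, $x \mapsto \{x\}$, and then apply the properties of diagonal-preserving lax extensions collected in \autoref{prop:laxprop}. Set $s := \cfpre_X(x_1,\dots,x_n) \in TX$. By naturality of $\cfpre : (-)^n \to T$,
\[
\cfpre_{\copow X}(\{x_1\},\dots,\{x_n\}) = \cfpre_{\copow X}(\eta_X(x_1),\dots,\eta_X(x_n)) = T\eta_X(s),
\]
so unfolding \autoref{def:Mosslift} reduces the goal to
$\{\,t \in TX \mid (t, T\eta_X(s)) \in L(\in_X)\,\} = \{s\}$.

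For the inclusion $\supseteq$, I would use that $\eta_X \subseteq \mathord{\in_X}$ as relations, since $x \in \{x\}$; by monotonicity of $L$ this gives $L\eta_X \subseteq L(\in_X)$. Combined with $L\eta_X = T\eta_X$ from \autoref{prop:laxprop}(iii), it follows immediately that $(s, T\eta_X(s)) \in L(\in_X)$. For the inclusion $\subseteq$, the key observation is the relational identity $\mathord{\in_X} \mathbin{;} \irel{\eta_X} = \Delta_X$, which holds because $x \in A$ together with $A = \{y\}$ forces $x = y$. Applying \autoref{prop:laxprop}(ii) with $R = \mathord{\in_X}$ and $g = \eta_X$, together with diagonal preservation, then yields
\[
L(\in_X) \mathbin{;} \irel{(T\eta_X)} = L(\mathord{\in_X} \mathbin{;} \irel{\eta_X}) = L\Delta_X = \Delta_{TX}.
\]
Thus whenever $(t, T\eta_X(s)) \in L(\in_X)$, composing with $(T\eta_X(s), s) \in \irel{(T\eta_X)}$ places $(t, s) \in \Delta_{TX}$, i.e.\ $t = s$.

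There is no serious obstacle beyond spotting the right relational identities; the argument is essentially one line of relational algebra once one notices that $\eta_X$ factors through $\mathord{\in_X}$ while $\irel{\eta_X}$ cancels $\mathord{\in_X}$ down to the diagonal. Diagonal preservation of $L$ enters at exactly the two points where diagonals must be pushed through $L$, making transparent why this hypothesis on the lax extension is precisely what is needed for singleton preservation.
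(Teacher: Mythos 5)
Your proposal is correct and follows essentially the same route as the paper's own proof: naturality of $\cfpre$ to rewrite the argument of $\nabla^L$ as $Ts_X(\cfpre_X(x_1,\dots,x_n))$, then the relational identity $\mathord{\in_X};\irel{s_X}=\Delta_X$ combined with Proposition~\ref{prop:laxprop}(ii) and diagonal preservation. The only cosmetic difference is that you split the argument into two inclusions (with the $\supseteq$ direction argued separately via $Ts_X\subseteq Ls_X\subseteq L(\in_X)$), whereas the paper obtains both at once from the single chain of equalities ending in $L\Delta_X=\Delta_{TX}$.
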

\noindent Marti and Venema already establish that the Moss liftings
are separating~\cite[Proposition~25]{MartiV15}; we show that they are
even strongly expressive:
\begin{prop}\label{prop:moss-strong-expr}
  The set $\pliftset$ of all Moss liftings of $T$ is strongly
  expressive.
\end{prop}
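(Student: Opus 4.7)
The proposition will follow almost immediately from the singleton-preservation result \emph{Proposition~\ref{prop:laxdescrpres}} together with the surjectivity built into the notion of a functor presentation. The plan is as follows.

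Fix an arbitrary $t \in TX$. Since $(\Sigma,\fpre)$ is a presentation of $T$, the component $\fpre_X : T_\Sigma X \epi TX$ is surjective, so we can pick an operation symbol $\cfpre/n \in \Sigma$ and an $n$-tuple $(x_1,\dots,x_n) \in X^n$ such that $t = \fpre_X(\cfpre,(x_1,\dots,x_n)) = \cfpre_X(x_1,\dots,x_n)$, using the convention from Definition~\ref{def:fnpre} that identifies $\cfpre$ with the corresponding natural transformation $(-)^n \to T$. Let $\plift \in \pliftset$ be the Moss lifting associated to $\cfpre$ as in Definition~\ref{def:Mosslift}.

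By Proposition~\ref{prop:laxdescrpres} applied to $\plift$ and the tuple $(x_1,\dots,x_n)$, we have
\[
  \plift_X(\{x_1\},\dots,\{x_n\}) = \{\cfpre_X(x_1,\dots,x_n)\} = \{t\},
\]
which is exactly the condition for strong expressivity. So $\pliftset$ is strongly expressive.

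There is essentially no obstacle here: the real work has already been done in Proposition~\ref{prop:laxdescrpres}, which upgrades the identity on singletons to an identity of sets, and the proposition just combines this with the surjectivity of the presentation. The only thing worth flagging is to be careful about the coherence between the notation ``$\cfpre_X$'' used as a map $X^n \to TX$ in Proposition~\ref{prop:laxdescrpres} and its definition via $\fpre$ in Definition~\ref{def:fnpre}, but this matches by construction.
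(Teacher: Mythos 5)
Your proof is correct and takes essentially the same approach as the paper: surjectivity of the presentation yields $t = \cfpre_X(x_1,\dots,x_n)$, and the identity $\plift_X(\{x_1\},\dots,\{x_n\}) = \{\cfpre_X(x_1,\dots,x_n)\}$ finishes the argument. The only (harmless) difference is that the paper re-derives the membership $t \in \plift_X(\{x_1\},\dots,\{x_n\})$ from the lax-extension axioms and then invokes singleton preservation, whereas you read the full identity directly off the ``more specifically'' clause of Proposition~\ref{prop:laxdescrpres}, which is a slight streamlining.
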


\begin{remark}
  Incidentally, this also means that for finitary functors the
  existence of a separating set of monotone predicate liftings is
  equivalent to the existence of a strongly expressive set of monotone
  singleton-preserving predicate liftings. The right-to-left
  implication is trivial; the converse follows from \autoref{prop:laxdescrpres},
  \autoref{prop:moss-strong-expr},
  and the fact that for finitary functors the existence of a separating set of monotone
  predicate liftings is equivalent to the existence of a lax extension
  \cite{MartiV15}.
\end{remark}
We have thus seen that given a fixed diagonal-preserving lax extension, from every natural
transformation $\cfpre: (-)^n \to T$ we obtain the corresponding Moss
lifting $\plift^\cfpre/n$, which is a monotone singleton-preserving
predicate lifting. Conversely, every monotone singleton-preserving
predicate lifting $\plift$ yields a natural transformation
$\cfpre^\plift: (-)^n \to T$
(\autoref{lem:pres}.\ref{item:op-from-pl}). From
\autoref{prop:laxdescrpres}, it is immediate that for
$\cfpre: (-)^n \to T$ we have
\begin{equation*}
\cfpre = \cfpre^{(\plift^\cfpre)}.
\end{equation*}
In particular, taking Moss liftings is an injection from functor
operations to monotone singleton-preserving predicate
liftings. Conversely, however, $\plift = \plift^{(\cfpre^\plift)}$ need
not hold in general -- recall that the construction of Moss liftings
depends on the choice of a diagonal-preserving lax extension, and a
functor may have more than one such extension. We report an example
due to Paul Levy:
\begin{example}
  Let $M$ be the monoid of non-negative reals. This monoid in fact
  forms a division semiring in the expected sense
  (e.g.~\cite{Weinert84}), i.e.\ it is a semiring, and its non-zero
  elements form a multiplicative group. We note that every division
  semiring is \emph{refinable} in the sense of Gumm and
  Schröder~\cite{GummSchroder01}, i.e.~$n$ specified row sums
  $b_1,\dots,b_n$ and $k$ specified column sums $c_1,\dots,c_k$ that
  induce the same total sum $d=\sum b_i=\sum c_j$ can always be
  realized by some $n\times k$-matrix $(a_{ij})$ -- in fact, one can
  just put $a_{ij}=b_ic_j/d$.
  Now let $b\in(0,1)$ be a transcendental number, and let
  $N\subseteq M$ be generated by~$b$ in~$M$ as a division
  semiring. Concretely, elements of~$N$ have the form $f(b)/g(b)$
  where $f(X)$ and $g(X)\neq 0$ are polynomials with non-negative rational
  coefficients. In particular, $1-b\notin N$: If we could write $1-b$
  in the prescribed form $f(b)/g(b)$, then by transcendentality of
  $b$, $f(X)/g(X)=1-X$, in contradiction to the leading coefficients
  of $f$ and~$g$ being positive.
  
  Both $M$ and $N$ are \emph{positive} ($x+y=0$ implies $x=y=0$)
  and refinable, so that the monoid-valued functors $F=M^{(-)}$ and
  $G=N^{(-)}$ both preserve weak pullbacks~\cite{GummSchroder01}. As
  recalled above, it follows that in both cases, the Barr extension is
  functorial, in particular is a diagonal-preserving lax
  extension. Now diagonal-preserving lax extensions are easily seen to
  be inherited by subfunctors, so that the Barr extension $\overline F$
  induces a diagonal-preserving lax extension~$L$ of~$G$. This
  extension differs from the Barr extension~$\overline G$; we immediately
  cast the counterexample in the form that interests us here:

  Let $X=\{u,v\}$. Representing elements of $GX$ as formal linear
  combinations, we have a binary functor operation $\cfpre(x,y)=x+by$
  for~$G$. We write $\plift^1$ and~$\plift^2$ for the Moss liftings
  induced from~$\cfpre$ via $\overline G$ and via $L$, respectively (by the
  above, both $\plift^1$ and $\plift^2$ induce~$\cfpre$). Then
  $u+bv\in\plift^1(\{u,v\},\{u\})$ but
  $u+bv\notin\plift^2(\{u,v\},\{u\})$: For the former, we have a
  unique witnessing element of $F{\in_X}$, namely
  $(1-b)(u,\{u,v\})+b(v,\{u,v\})+b(u,\{u\})$; but in $G\,\!{\in_X}$,
  there is no witnessing element since $1-b\notin N$.

\end{example}
\noindent Summing up, even for weak-pullback preserving functors,
singleton-preserving monotone predicate liftings are not in general
uniquely determined by the functor operation they induce. In the above
example, both singleton predicate liftings inducing the given functor
operation arise as Moss liftings, via different diagonal-preserving
lax extensions; we currently do not know whether every
singeleton-preserving monotone predicate lifting is a Moss lifting for
some diagonal-preserving lax extension.
\begin{remark}\label{rem:moss-unary}
  It is fairly easy to see that for monotone singleton-preserving
  \emph{unary} predicate liftings~$\plift$, we do have
  $\plift = \plift^{(\cfpre^\plift)}$. 
\end{remark}


\section{Generic Expressions}\label{sec:expr}

We proceed to define, given a set of monotone and singleton-preserving
predicate liftings for a functor $T$, syntactic expressions describing
the behaviour of states of $T$-coalgebras. Our main result is a
Kleene-type theorem stating that for every state of a $T$-coalgebra
there exists an equivalent expression, and conversely, every
expression describes the behaviour of some state of a finite
$T$-coalgebra. As indicated above, our expression language is a small
fragment of the coalgebraic $\mu$-calculus, essentially restricted to
modalities and greatest fixed points $\nu z.\,\phi$.

\begin{defn}[Expressions]\label{def:generic-expressions}
  We fix a set $\fixvars$ of \emph{fixed point variables} and a
  set~\modalset of \emph{modalities} equipped with an arity function
  $\ar:\modalset \rightarrow \omega$; we write $L/n \in \modalset$ if
  $L\in\modalset$ and $\ar(L) = n$. The set~\expr of
  \emph{expressions}~$\phi,\dots$ is then defined by the grammar
  \[
    \phi ::= z \mid \nu z.\,\phi \mid \modalsym (\phi_1,\dots\phi_n)
    \qquad (z\in\fixvars,L/n\in\modalset).
  \]
  An expression is \emph{closed} if all its fixed point variables
  are bound by a fixed point operator. An expression is \emph{guarded}
  if all its fixed point variables are separated from their binding fixed
  point operator by at least one modality. We write \gcexpr for the
  set of closed and guarded expressions. We have the usual notion of
  \emph{$\alpha$-equivalence} of expressions modulo renaming of bound
  variables. An occurrence of a fixed point operator in an expression
  is \emph{top-level} if it is not in scope of a modality.
\end{defn}
\noindent We next define the semantics of expressions, which agrees
with their interpretation as formulas in coalgebraic logic. We fix the
requisite data:
\begin{assumption}
  For the rest of the paper, we fix a set \modalset of modalities and an assignment of a
  singleton-preserving monotone $n$-ary predicate lifting $\sem{L}$
  for~$T$ to each $L/n\in\modalset$ such that the set
  $\pliftset:=\{\sem{L}\mid L\in\modalset\}$ is strongly expressive.
\end{assumption}
\noindent By the results of the previous section, these assumptions
imply that $T$ has a presentation and is thus finitary (\autoref{thm:at}).
\begin{defn}[Semantics]\label{def:coalglogsem}
  Given a $T$-co\-al\-ge\-bra $C=(X, \xi)$ and a valuation
  $\val : \fixvars \rightarrow \copow X$, the semantics
  $\semv{C}{\val}{\phi}\subseteq X$ of expressions $\phi\in\expr$ is
  given by
  \begin{align*}
    \semv{C}{\val}{z} & = \val( z ) \\
    \semv{C}{\val}{\modalsym(\phi_1,\dots \phi_n)} & = \xi^{-1}[ \sem[X]{L}
      (\semv{C}{\val}{\phi_1},\dots \semv{C}{\val}{\phi_n})] \\
    \semv{C}{\val}{\nu z. \phi} & = \nu Y .       \semv{C}{\val[ z \mapsto Y ]}{\phi}
  \end{align*}
  where as usual, we use $\nu$ to denote greatest fixed points of monotone
  maps.  When~$\phi$ is closed, we simply write $\sem[C]{\phi}$ in
  lieu of $\semv{C}{\val}{\phi}$, and we drop the subscript~$C$
  whenever~$C$ is clear from the context.
\end{defn}
\noindent Note that since the predicate liftings $\sem{L}$ are
monotone and $\xi^{-1}$ is a monotone map, the requisite greatest
fixed points exist by the Knaster-Tarski fixed point theorem. Moreover, the assumption that the predicate liftings are singleton-preserving will ensure that every expression describes exactly one behavioural equivalence class (see \autoref{thm:generic-expression-describes-exactly-one-bisimulation-class}).  

By dint of the fact that our expression language is contained in the
coalgebraic $\mu$-calculus, the following is an immediate consequence
of the fact that the latter is invariant under behavioural equivalence
(e.g.~\cite{SchroderVenema18}):
\begin{lem}[Invariance under behavioural
  equivalence]\label{lem:invariance}
  For every closed expression~$\phi$ and coalgebras $C=(X,\xi)$,
  $D=(Y,\zeta)$, if states $x\in X$ and $y\in Y$ are behaviourally
  equivalent, then $x\in\sem[C]{\phi}$ iff $y\in\sem[D]{\phi}$.
\end{lem}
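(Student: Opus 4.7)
The plan is to prove a stronger statement by structural induction on $\phi$ and then derive the lemma. The stronger statement is: for every coalgebra morphism $h\colon C\to D$ and valuations $\val_C\colon\fixvars\to\copow X$, $\val_D\colon\fixvars\to\copow Y$ satisfying $\val_C(z)=h^{-1}[\val_D(z)]$ for every $z\in\fixvars$,
\[
  \semv{C}{\val_C}{\phi}=h^{-1}[\semv{D}{\val_D}{\phi}].
\]
The lemma then drops out from the standard definition of behavioural equivalence: $x\in X$ and $y\in Y$ are behaviourally equivalent iff there is a cospan of coalgebra morphisms $f\colon C\to E$, $g\colon D\to E$ with $f(x)=g(y)$; applying the strengthened claim to each leg yields $x\in\sem[C]{\phi}$ iff $f(x)\in\sem[E]{\phi}$ iff $g(y)\in\sem[E]{\phi}$ iff $y\in\sem[D]{\phi}$ for closed $\phi$.

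The variable case is immediate from the matching assumption on valuations. For the modal case $\phi=\modalsym(\phi_1,\dots,\phi_n)$, I invoke the induction hypothesis on each $\phi_i$ to rewrite $\semv{C}{\val_C}{\phi_i}$ as $h^{-1}[\semv{D}{\val_D}{\phi_i}]$, then combine naturality of the predicate lifting~$\sem{\modalsym}$ at~$h$ with the coalgebra morphism identity $Th\circ\xi=\zeta\circ h$:
\[
  \xi^{-1}\bigl[\sem[X]{L}(h^{-1}[\dots])\bigr]
  =\xi^{-1}\bigl[(Th)^{-1}[\sem[Y]{L}(\dots)]\bigr]
  =h^{-1}\bigl[\zeta^{-1}[\sem[Y]{L}(\dots)]\bigr].
\]

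The main work is the fixed point case $\phi=\nu z.\,\psi$. Write $F_C(Z):=\semv{C}{\val_C[z\mapsto Z]}{\psi}$ and $F_D(W):=\semv{D}{\val_D[z\mapsto W]}{\psi}$; monotonicity of every $\sem{L}$, together with monotonicity of preimages, makes $F_C$ and $F_D$ monotone, so their greatest post-fixed points $Y_C,Y_D$ exist by Knaster--Tarski. For $h^{-1}[Y_D]\subseteq Y_C$, note that $\val_C[z\mapsto h^{-1}[Y_D]]$ still matches $\val_D[z\mapsto Y_D]$ via $h^{-1}$, so the induction hypothesis applied to~$\psi$ yields $F_C(h^{-1}[Y_D])=h^{-1}[F_D(Y_D)]\supseteq h^{-1}[Y_D]$, whence $h^{-1}[Y_D]$ is a post-fixed point of $F_C$ and is thus contained in $Y_C$. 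For $Y_C\subseteq h^{-1}[Y_D]$, set $Z:=h[Y_C]$; then $Y_C\subseteq h^{-1}[Z]$, and by monotonicity of $F_C$ and the induction hypothesis (with valuations extended by $z\mapsto h^{-1}[Z]$ and $z\mapsto Z$),
\[
  Y_C\subseteq F_C(Y_C)\subseteq F_C(h^{-1}[Z])=h^{-1}[F_D(Z)],
\]
so $Z=h[Y_C]\subseteq F_D(Z)$. Hence $Z$ is a post-fixed point of $F_D$, giving $Z\subseteq Y_D$ and therefore $Y_C\subseteq h^{-1}[Z]\subseteq h^{-1}[Y_D]$. The one point requiring care throughout is keeping the valuation-matching hypothesis invariant when extending valuations with~$z$ in both subset directions; the remainder is routine bookkeeping with naturality and Knaster--Tarski.
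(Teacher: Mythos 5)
Your proof is correct, but it takes a different route from the paper: the paper gives no explicit argument at all, observing instead that the expression language is a fragment of the coalgebraic $\mu$-calculus and citing the known invariance of that calculus under behavioural equivalence. You instead prove the result from first principles, via the standard strengthening that coalgebra morphisms $h$ both preserve and reflect truth, $\semv{C}{\val_C}{\phi}=h^{-1}[\semv{D}{\val_D}{\phi}]$ for matching valuations, followed by the cospan characterization of behavioural equivalence. All three induction cases check out: the variable case is the matching hypothesis, the modal case is naturality of the predicate lifting plus the homomorphism square $Th\circ\xi=\zeta\circ h$, and the two inclusions in the fixed point case correctly exhibit $h^{-1}[Y_D]$ as a post-fixed point of $F_C$ and $h[Y_C]$ as a post-fixed point of $F_D$ (the latter using $h[h^{-1}[F_D(Z)]]\subseteq F_D(Z)$). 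What your approach buys is self-containedness: restricted to this negation- and disjunction-free fragment with only greatest fixed points, the full machinery of the cited result is not needed, and the valuation-matching invariant you maintain is exactly the bookkeeping that the citation hides. What the paper's approach buys is brevity and the conceptual point it is explicitly trying to make, namely that the expression language literally lives inside the coalgebraic $\mu$-calculus and inherits its metatheory. One cosmetic caveat: for the final step you should note that for closed $\phi$ the choice of valuations is immaterial, so suitable matching valuations (e.g.\ $\val_C=f^{-1}\circ\val_E$) can always be chosen; you implicitly do this, and it is harmless.
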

\begin{lem}\label{lem:fp-substitution}
  For all expressions $\phi \in \expr$, $\sem{\nu z.\phi} = \sem{\phi[ \nu z.\phi/z ]}$.
\end{lem}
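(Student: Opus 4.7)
The plan is to reduce the claim to the conjunction of two standard facts: a syntactic substitution lemma relating substitution of expressions to updates of the valuation, and the Knaster--Tarski fixed point property, which is built into the definition of $\sem{\nu z.\phi}$.

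First I would prove the substitution lemma: for every expression $\phi$, variable $z$, expression $\psi$, coalgebra $C$, and valuation $\val$,
\[
  \semv{C}{\val}{\phi[\psi/z]}
  \;=\;
  \semv{C}{\val[z\mapsto\semv{C}{\val}{\psi}]}{\phi}.
\]
This goes by structural induction on $\phi$. The variable case is immediate from the definition of the updated valuation, the modal case follows directly from the induction hypotheses applied componentwise, and the fixed point case $\phi = \nu y.\chi$ is handled by $\alpha$-renaming so that $y \neq z$ and $y$ is not free in $\psi$, which the paper already permits; after renaming, one unfolds both sides to greatest fixed points of monotone maps on $\copow X$ that agree pointwise by the induction hypothesis.

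Next, by the semantics clause for $\nu$ in \autoref{def:coalglogsem}, $\semv{C}{\val}{\nu z.\phi}$ is the greatest fixed point of the monotone map $F : Y\mapsto\semv{C}{\val[z\mapsto Y]}{\phi}$ (monotonicity being inherited from monotonicity of all $\sem{L}$ and of $\xi^{-1}$). Knaster--Tarski then yields $F(\semv{C}{\val}{\nu z.\phi}) = \semv{C}{\val}{\nu z.\phi}$, i.e.
\[
  \semv{C}{\val}{\nu z.\phi}
  \;=\;
  \semv{C}{\val[z\mapsto\semv{C}{\val}{\nu z.\phi}]}{\phi}.
\]
Applying the substitution lemma with $\psi = \nu z.\phi$ to the right-hand side of the desired equation gives
\[
  \semv{C}{\val}{\phi[\nu z.\phi/z]}
  \;=\;
  \semv{C}{\val[z\mapsto\semv{C}{\val}{\nu z.\phi}]}{\phi},
\]
which matches the previous display, completing the proof.

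The only real obstacle is bookkeeping in the substitution lemma, namely the $\nu y.\chi$ case where care must be taken with $\alpha$-equivalence to avoid capture; this is routine once one commits to working with expressions up to $\alpha$-renaming as the paper already does. Everything else reduces to an unfolding of definitions plus Knaster--Tarski.
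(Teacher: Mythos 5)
Your proof is correct and follows essentially the same route as the paper: a substitution lemma $\semv{C}{\val}{\phi[\psi/z]}=\semv{C}{\val[z\mapsto\semv{C}{\val}{\psi}]}{\phi}$ proved by structural induction (with the capture issue handled by $\alpha$-renaming where the paper instead imposes a freshness side condition), followed by the observation that the greatest fixed point is a fixed point of the defining monotone map. No substantive differences.
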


\begin{expl}\label{ex:expr}
\begin{sparenumerate}
\item For the deterministic automaton functor $TX = 2 \times X^A$ with
  $A = \{a,b\}$, we let $\modalset$ be the set of two binary
  modalities $\modab 0 {(-)} {(-)}$ and $\modab 1 {(-)} {(-)}$
  (corresponding to the two Moss liftings of~\autoref{ex:Moss}.1). We
  interpret expressions in the final $T$-coalgebra $\nu T$ carried by
  all formal languages over $A$. Here are a few closed and guarded
  expressions and their semantics in $\nu T$ (as usual $|w|_b$ denotes the
  number of $b$'s in $w$):
  \begin{align*}
    \sem{\nu v. \modab 0 v v} & = \{\emptyset\} \\
    \sem{\nu z. \modab 1 z z} & = \{A^*\} \\
    \sem{\nu x.  \modab 1 x {\nu y. \modab 0 y x}} 
    & 
    = \{\{ w \in A^* \mid |w|_b\ \text{even} \}\}
  \end{align*}
  \[
    \begin{tikzpicture}[>=Latex,shorten >=1pt,on grid,node distance=2cm,every state/.style={minimum
      size=10pt},auto,every node/.style={font=\tiny}]
      \node[state] (q0) {v};
      \path[->] (q0) edge [loop left] node {a,b} ();

      \node[state,accepting] (q1) [right=of q0] {z};
      \path[->] (q1) edge [loop left] node {a,b} ();

      \node[state,accepting] (q2) [right=of q1] {x};
      \node[state] (q3) [right=of q2] {y};
      \path[->] (q2) edge [loop left]  node {a} ()
                     edge [bend left]  node {b} (q3)
                (q3) edge [loop right] node {a} ()
                     edge [bend left] node {b} (q2);
    \end{tikzpicture}
  \]
  Note that the semantics of each of these expressions is a singleton
  (up to behavioural equivalence); in fact, for an arbitrary
  $T$-coalgebra $X$, the semantics of the above expressions is the set
  of states accepting the language in the singleton on the
  right. In~\autoref{lem:gfp-is-singleton} further below we prove that
  this holds in general.
  
\item Consider $T = \pow_\omega(A \times -)$ where $A$ is a finite set of
    labels. A presentation of $T$ is given by the signature containing
    for each $n$-tuple $\vec a = (a_1, \ldots, a_n) \in A^n$ one $n$-ary
    operation symbol, and the corresponding natural transformation
    $\cfpre^{\vec a}: (-)^n \to T$ is defined by
    \[
      \cfpre^{\vec a}_X: (x_1, \ldots, x_n) \mapsto \{(a_1,x_1), \ldots,
      (a_n,x_n)\}. 
    \]
    The corresponding Moss lifting is $\plift^{\vec a}/n$ given by
    \begin{multline*}\textstyle
      \plift^{\vec a}_X(Y_1, \ldots, Y_n)
      =
      \{  Z\in\fpow (A \times X) \mid Z
           \subseteq\bigcup_{i=1}^n(\{a_i\}
           \times Y_i) \\
         \text{and }Z \cap \{a_i\} \times Y_i\neq\emptyset\text{ for
                                      $i=1,\ldots,n$}\}
    \end{multline*}
    (cf.~\eqref{eq:moss-pfin}). Now put
    $\modalset = \{[\vec a]/n \mid \vec a \in A^n, n \in \omega\}$ and
    interpret each $[\vec a]$ by $\plift^{\vec a}$.  For example, for
    $A = \{a,b\}$ the expression $\nu x. [a]([a,b,a](x,[()],[()]))$,
    where $[()]$ is the unique nullary modality in $\modalset$,
    describes the left-hand state in the following labelled transition system
    \[
    \begin{tikzpicture}[shorten >=1pt,on grid,node distance=2cm,every state/.style={minimum
      size=10pt},auto,every node/.style={font=\tiny}]
      \node[state] (q0) {x};
      \node[state] (q1) [right=of q0] {y};
      \node[state] (q2) [right=of q1] {z};
      \node[state] (q3) [below=of q1] {w};

      \path[->] (q0) edge [bend right] node[below] {a} (q1)
                (q1) edge [bend right] node[above] {a} (q0)
                     edge node {a} (q2)
                     edge node {b} (q3);
    \end{tikzpicture}
    \]
  \item For $T = \dist$ we have the presentation with an $n$-ary
    operation $\cfpre^{\vec p}$ for every $\vec p = (p_1, \ldots, p_n)$
    with $\sum_{i=1}^n p_i = 1$ and corresponding Moss liftings as
    described in~\autoref{ex:Moss}.3. For each such $\vec p$, we
    introduce a modality $[\vec p]/n\in \modalset$, and interpret it
    as $\plift^{\vec p}$. Now consider the Markov chain (i.e.\
    $\dist$-coalgebra)
    \[
    \begin{tikzpicture}[shorten >=1pt,on grid,node distance=2.5cm,every state/.style={minimum
      size=10pt},auto,every node/.style={font=\tiny}]
      \node[state] (q0) {x};
      \node[state] (q1) [above right=of q0] {y};
      \node[state] (q2) [below right=of q1] {z};

      \path[->] (q0) edge [out=210,in=150,loop]  node {$\nicefrac{2}{3}$} ()
                     edge [out=65,in=215]  node {$\nicefrac{1}{3}$} (q1)
                (q1) edge [out=120,in=60,loop] node {$\nicefrac{1}{3}$} ()
                     edge [out=245,in=25]  node[right] {$\nicefrac{1}{6}$} (q0)
                     edge              node {$\nicefrac{1}{2}$} (q2)
                (q2) edge [out=-30,in=30,loop] node[right] {$\nicefrac{3}{4}$} ()
                     edge              node {$\nicefrac{1}{4}$} (q0);
      \end{tikzpicture}
    \]
    The behaviour of the left-hand state is described by the expression
    \[
      \nu x. [\nicefrac{2}{3}, \nicefrac{1}{3}](x, \nu y.
      [\nicefrac{1}{6},\nicefrac{1}{3},\nicefrac{1}{2}](x,y, \nu z.
      [\nicefrac{1}{4},\nicefrac{3}{4}](x,z))).
    \]
\end{sparenumerate}
\end{expl}
\begin{remark}
  The syntax of our expressions is determined purely by the finitary
  coalgebraic type functor, more precisely, by a given strongly
  expressive set $\pliftset$ of monotone singleton-preserving
  predicate lifting. In contrast, existing expression calculi such as
  standard regular expressions for deterministic automata or the
  coalgebraic expression calculi in~\cite{SilvaBR10,SilvaBBR11} use
  extra operations (e.g.~expressing union or concatenation of
  languages). These operations are not dictated by the setting,
  viz.~an endofunctor on $\Set$. Rob Myers' PhD thesis~\cite{M13}
  explains nicely how such extra operations are obtained naturally in
  an expression calculus when one works over an algebraic category
  (such as the one of join-semilattices or vector spaces over the
  reals, i.e.~algebras for the monad $\Reals^{(-)}$). We leave the
  extension of our expression language to this more general setting
  for future work.
\end{remark}
\noindent Our Kleene theorem requires a number of technical lemmas:

\begin{lem}\label{lem:predicate-liftings-preserve-equivalence-classes}
  Let $\plift/n$ and $\plift'/n'$ be monotone singleton-preserving
  predicate liftings for~$T$. Let $S$ be an equivalence relation on a
  set $X$, let $A_1,\dots,A_{n}$, be
  $S$-equivalence classes or empty, and let $B_1,\dots,B_{n'}$ be $S$-closed
  subsets of $X$. Then the following holds.
  \begin{enumerate}[(1)]
   \item \(
    \plift_{X} (A_1,\dots A_n) \subseteq \plift_{X}'
      (B_1,\dots B_{n'})  \text{ or }
    \plift_{X} (A_1,\dots A_n) \cap \plift_{X}' (B_1,\dots B_{n'}) =
      \varnothing.
   \)
   \item If the $B_1,\ldots,B_{n'}$ are even $S$-equivalence classes or empty, then
  \begin{align*}
    \plift_{X} (A_1,\dots A_n) = \plift_{X}'
      (B_1,\dots B_{n'})  \text{ or }
    \plift_{X} (A_1,\dots A_n) \cap \plift_{X}' (B_1,\dots B_{n'}) =
      \varnothing.
  \end{align*}

   \end{enumerate}
\end{lem}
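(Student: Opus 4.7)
The plan is to exploit naturality of predicate liftings with respect to the quotient map $q : X \to X/S$, together with singleton preservation. The slogan is: after pushing everything down to $X/S$, each $A_i$ becomes either a singleton or empty, so $\plift$ applied there is at most a singleton by singleton-preservation and monotonicity, and the two halves of the dichotomy reduce to whether that unique element lies in the corresponding $\plift'$ set.

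Concretely, I would first observe that since each $A_i$ is either an $S$-equivalence class or empty, we have $A_i = q^{-1}[q[A_i]]$ with $q[A_i]$ a singleton or empty; and since each $B_j$ is $S$-closed, we have $B_j = q^{-1}[q[B_j]]$. By naturality of $\plift$ and $\plift'$ applied to $q$, together with the fact that $\copow q$ takes preimages, this yields
\[
  \plift_X(A_1,\dots,A_n) = (Tq)^{-1}\bigl[\plift_{X/S}(q[A_1],\dots,q[A_n])\bigr]
\]
and analogously $\plift'_X(B_1,\dots,B_{n'}) = (Tq)^{-1}\bigl[\plift'_{X/S}(q[B_1],\dots,q[B_{n'}])\bigr]$.

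Now write $U := \plift_{X/S}(q[A_1],\dots,q[A_n])$. Monotonicity of $\plift$ together with singleton-preservation forces $|U|\le 1$: if all $q[A_i]$ are singletons, $U$ is a singleton by singleton-preservation, and if some $q[A_i]=\emptyset$ then by monotonicity $U$ is contained in a singleton, hence itself empty or a singleton. For (1), write $V := \plift'_{X/S}(q[B_1],\dots,q[B_{n'}])$ (an arbitrary subset of $T(X/S)$) and use $(Tq)^{-1}[U]\cap(Tq)^{-1}[V] = (Tq)^{-1}[U\cap V]$: if $U=\emptyset$ or $U=\{t\}\subseteq V$, inclusion holds, and if $U=\{t\}\not\subseteq V$, the intersection is empty. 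For (2), the same reasoning applied to $\plift'$ shows $|V|\le 1$ as well, so the four subcases ($U,V$ each empty or a singleton) give equality exactly when $U=V$ (both empty, or both the same singleton) and disjointness otherwise.

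I do not foresee a serious obstacle; the one spot worth care is the case where some $A_i$ is empty, which is why we must invoke monotonicity (not just singleton-preservation) to bound $|U|$. Everything else is a clean application of naturality of predicate liftings and elementary reasoning about preimages under $Tq$.
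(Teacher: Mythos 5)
Your proposal is correct and follows essentially the same route as the paper's own proof: naturality of the liftings with respect to the quotient map $q:X\to X/S$, plus monotonicity and singleton preservation to see that the lifted set over $X/S$ is at most a singleton, from which both dichotomies follow by elementary reasoning about preimages under $Tq$. The only cosmetic difference is that you phrase the argument directly in terms of the sets $U,V\subseteq T(X/S)$ and $(Tq)^{-1}[U\cap V]$, whereas the paper picks element witnesses $t,s$ and shows $Tq(t)=Tq(s)$; these are the same argument.
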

\begin{proof}[Sketch]
  Apply naturality to the quotient map $q: X \epi X/S$.
\end{proof}
\noindent In the proof of \autoref{lem:gfp-is-singleton} further
below, we will make use of a $\pliftset$-bisimulation. We briefly
recall the essentials of this notion~\cite{GorinSchroder13}:
\sloppypar%
\begin{defn}[$\pliftset$-Simulation]
  Given a pair of $T$-coalgebras $( X,\xi )$ and $( Y,\zeta )$, a
  \emph{$\pliftset$-simulation} is a relation $S\subseteq X\times Y$ such
  that for all predicate liftings $\plift \in \pliftset$ and
  $X_i\subseteq X$, $x\mathbin{S}y$ implies
  \[
    \xi( x ) \in \plift_X( X_1,\dots, X_n ) \Rightarrow \zeta( y
    )\in\plift_Y( S[ X_1 ],\dots, S[ X_n ] ).
  \]
  A \emph{$\pliftset$-bisimulation} is a $\pliftset$-simulation $S$ such that
  $\irel{S}$ is also a $\pliftset$-simulation. Elements $(x,y) \in X\times Y$
  are \emph{$\pliftset$-bisimilar} if there is a $\pliftset$-bisimulation
  relating $x$ and~$y$.
\end{defn}
\begin{thm}
  \label{thm:Lambda}
  \pliftset-bisimilarity concides with behavioural equivalence.
\end{thm}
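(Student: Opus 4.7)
The plan is to prove both inclusions. For behavioural equivalence $\Rightarrow$ $\pliftset$-bisimilarity, I would start from a cospan of coalgebra morphisms $f\colon(X,\xi)\to(Z,\gamma)$, $g\colon(Y,\zeta)\to(Z,\gamma)$ identifying $x$ and $y$ via $f(x)=g(y)$, and show that the kernel relation $S=\{(x',y')\mid f(x')=g(y')\}$ is a $\pliftset$-bisimulation. The verification is routine: if $\xi(x')\in\plift_X(A_1,\dots,A_n)$, monotonicity gives $\xi(x')\in\plift_X(f^{-1}[f[A_1]],\dots)$, naturality of $\plift$ at $f$ rewrites this as $Tf(\xi(x'))\in\plift_Z(f[A_1],\dots)$, the morphism condition supplies $Tg(\zeta(y'))=Tf(\xi(x'))\in\plift_Z(f[A_1],\dots)$, and naturality at $g$ concludes with $\zeta(y')\in\plift_Y(g^{-1}[f[A_1]],\dots)=\plift_Y(S[A_1],\dots,S[A_n])$. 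The symmetric argument handles $\irel S$.

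For the converse, let $S\subseteq X\times Y$ be a $\pliftset$-bisimulation. I would form the quotient $q\colon X+Y\epi Z$ of $X+Y$ by the least equivalence relation identifying $\inl(x)$ with $\inr(y)$ whenever $(x,y)\in S$, and take $f=q\circ\inl$, $g=q\circ\inr$. The central step is to establish
\[
  Tf(\xi(x))=Tg(\zeta(y))\qquad\text{for every $(x,y)\in S$.}
\]
To prove it, strong expressivity yields $\plift/n\in\pliftset$ and $x_1,\dots,x_n\in X$ with $\{\xi(x)\}=\plift_X(\{x_1\},\dots,\{x_n\})$. Repeating the naturality/monotonicity calculation from the first direction with $A_i:=\{x_i\}$ places $Tf(\xi(x))$ inside $\plift_Z(\{f(x_1)\},\dots,\{f(x_n)\})$, which by singleton preservation is a singleton. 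Simultaneously, $\pliftset$-simulation gives $\zeta(y)\in\plift_Y(S[\{x_1\}],\dots,S[\{x_n\}])$, and by construction of $q$ we have $S[\{x_i\}]\subseteq g^{-1}[\{f(x_i)\}]$; monotonicity plus naturality at $g$ thus force $Tg(\zeta(y))$ into the same singleton, giving the equality. A standard zig-zag argument along chains in the equivalence closure of $S$ then shows that $\gamma(z):=Tf(\xi(x))$ for any $x$ with $f(x)=z$ (equivalently $Tg(\zeta(y))$ for any $y$ with $g(y)=z$) is a well-defined coalgebra structure $\gamma\colon Z\to TZ$, turning $f$ and $g$ into coalgebra morphisms that identify any $\pliftset$-bisimilar pair.

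The main obstacle is the key identity $Tf(\xi(x))=Tg(\zeta(y))$, for which both hypotheses on $\pliftset$ are indispensable. Strong expressivity is what allows us to anchor the arbitrary behaviour $\xi(x)$ to a predicate lifting applied to singletons, while singleton preservation is what collapses the target set $\plift_Z(\{f(x_1)\},\dots,\{f(x_n)\})$ in $TZ$ to a single element. Neither condition alone suffices: without strong expressivity there is no canonical $\pliftset$-description of $\xi(x)$ to work with, and without singleton preservation the images $Tf(\xi(x))$ and $Tg(\zeta(y))$ need not coincide even when both lie in the same $\plift_Z$-image.
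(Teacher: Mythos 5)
Your proof is correct, but it takes a genuinely different route from the paper. The paper gives no direct argument: it observes (in the remark following the theorem) that \pliftset-bisimulation is an instance of Marti and Venema's $L$-bisimulation for the diagonal-preserving lax extension induced by a separating set of monotone predicate liftings, and then invokes their Theorem~11; consequently the paper's version needs only separation and monotonicity of \pliftset. You instead prove both inclusions from scratch. Your easy direction (kernel relation of a cospan is a \pliftset-bisimulation) uses only monotonicity and naturality and matches the general fact. Your converse is an explicit quotient construction: the key identity $Tf(\xi(x))=Tg(\zeta(y))$ for $(x,y)\in S$ is established by anchoring $\xi(x)$ via strong expressivity to $\plift_X(\{x_1\},\dots,\{x_n\})$, pushing both sides into the singleton $\plift_Z(\{f(x_1)\},\dots,\{f(x_n)\})$, and then propagating along zig-zag chains to get a well-defined structure map on the quotient; I checked the steps ($S[\{x_i\}]\subseteq g^{-1}[\{f(x_i)\}]$, naturality at $f$ and $g$, singleton preservation collapsing the target) and they go through. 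What each approach buys: the paper's reduction is shorter and yields the stronger statement under weaker hypotheses (separation plus monotonicity, no singleton preservation or strong expressivity), whereas your argument is self-contained, avoids the machinery of lax extensions, and produces the witnessing cospan explicitly --- but, as written, it genuinely relies on the full standing assumptions on \pliftset for the hard direction and so does not recover the generalization recorded in the paper's remark. It is worth noting explicitly in your write-up that only the \emph{forward} simulation property of $S$ is used in your key identity; that is legitimate here (a bisimulation is in particular a simulation) but is a feature of the strong hypotheses rather than of bisimulation per se.
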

\begin{rem}
  In fact, for Theoren~\ref{thm:Lambda} it is sufficient that
  \pliftset is separating and the predicate liftings in \pliftset are
  monotone. It turns out that Theorem~\ref{thm:Lambda} is actually a
  special case of~\cite[Theorem~11]{MartiV15}, applied to the case
  where the lax extension is induced by a separating set of monotone
  predicate liftings.
\end{rem}

\begin{lem}\label{lem:gfp-is-singleton}
  Let $(X,\xi)$ be a $T$-coalgebra, let $\plift_i/k\in\pliftset$,
  $i = 1, \ldots, k$, and let $(A_1,\dots, A_k)$ be the greatest fixed
  point of the map $h:(\copow X)^k \rightarrow (\copow X)^k$ defined
  by
  \begin{equation}\label{eq:flat-gfp}
    \begin{pmatrix}
      X_1 \\
      \vdots \\
      X_k
    \end{pmatrix} \mapsto 
    \begin{pmatrix}
      \xi^{-1}[\plift_{1,X}(X_1,\dots,X_k)] \\
      \vdots \\
      \xi^{-1}[\plift_{k,X}(X_1,\dots, X_k)]
    \end{pmatrix}
  \end{equation}
  Then for each $i$, all elements of~$A_i$ are behaviourally
  equivalent, and for all~$i$,~$j$, either $A_i\cap A_j=\varnothing$
  or $A_i=A_j$.
\end{lem}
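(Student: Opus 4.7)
I would prove the two claims in reverse order, first establishing the disjoint-or-equal dichotomy for the $A_i$, then using it to show behavioural equivalence inside each $A_i$.

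For the second claim, the plan is a transfinite induction along the Kleene iteration computing the greatest fixed point. Set $A_i^0 := X$, $A_i^{\alpha+1} := \xi^{-1}[\plift_{i,X}(A_1^\alpha,\dots,A_k^\alpha)]$, and $A_i^\alpha := \bigcap_{\beta<\alpha} A_i^\beta$ at limits; this descending chain stabilizes at $(A_1,\dots,A_k)$. The inductive hypothesis is: for all $i,j$, $A_i^\alpha\cap A_j^\alpha = \varnothing$ or $A_i^\alpha = A_j^\alpha$. At a successor stage, the IH lets me introduce the equivalence relation $\approx_\alpha$ on $X$ whose nontrivial classes are exactly the distinct nonempty sets among $A_1^\alpha,\dots,A_k^\alpha$, with singleton classes for the remaining elements (transitivity uses the IH: a common member of $A_l^\alpha$ and $A_{l'}^\alpha$ forces them equal). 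Each $A_l^\alpha$ is then either empty or a single $\approx_\alpha$-class, so \autoref{lem:predicate-liftings-preserve-equivalence-classes}(2) applies to $\plift_i$ and $\plift_j$ (both with argument tuple $(A_1^\alpha,\dots,A_k^\alpha)$) and yields that $\plift_{i,X}(A_1^\alpha,\dots,A_k^\alpha)$ and $\plift_{j,X}(A_1^\alpha,\dots,A_k^\alpha)$ are either equal or disjoint; pulling back under $\xi^{-1}$ transports this to the $\alpha+1$ stage. Limit stages are routine: any single disjoint stage persists by descent, and otherwise all earlier stages agree pairwise and so do their intersections.

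With the second claim in hand, $R := \Delta_X \cup \bigcup_i A_i\times A_i$ is an equivalence relation whose nontrivial classes are exactly the distinct nonempty $A_i$. I would show that $R$ is a $\pliftset$-bisimulation and then conclude via \autoref{thm:Lambda}. Given $x R y$ with $x\neq y$, we have $x,y\in A_i$ for some $i$, hence $\xi(x),\xi(y)\in \plift_{i,X}(A_1,\dots,A_k)$. For any $\plift/n\in\pliftset$ with $\xi(x)\in\plift_X(Y_1,\dots,Y_n)$, \autoref{lem:predicate-liftings-preserve-equivalence-classes}(1) applied with $S=R$ (the $A_l$ are $R$-classes or empty, and the $R[Y_l]$ are $R$-closed) gives $\plift_{i,X}(A_1,\dots,A_k)\subseteq\plift_X(R[Y_1],\dots,R[Y_n])$; the alternative disjointness is ruled out because $\xi(x)$ lies in both sets, by monotonicity of $\plift$ and $Y_l\subseteq R[Y_l]$. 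Hence $\xi(y)\in\plift_X(R[Y_1],\dots,R[Y_n])$, and symmetry of $R$ handles the converse simulation condition.

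The main obstacle is an apparent circularity in the second claim: a direct application of \autoref{lem:predicate-liftings-preserve-equivalence-classes}(2) already requires the $A_l$'s to be equivalence classes, which is essentially what we are trying to prove. The transfinite induction is precisely what breaks this circularity, since at each successor step only the \emph{previous} approximants $A_l^\alpha$ need to be equivalence classes, and this is supplied by the IH. Everything then hinges on the fact that the disjoint-or-equal property is preserved by the liftings (via \autoref{lem:predicate-liftings-preserve-equivalence-classes}(2)), by taking $\xi^{-1}$, and by arbitrary intersections — after which the first claim follows by a single, non-inductive bisimulation argument.
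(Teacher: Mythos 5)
Your proof is correct. Its second half --- defining $R=\Delta_X\cup\bigcup_i A_i\times A_i$, showing it is a $\pliftset$-bisimulation via \autoref{lem:predicate-liftings-preserve-equivalence-classes}(1) with disjointness ruled out by monotonicity, and concluding by \autoref{thm:Lambda} --- coincides with the paper's argument. Where you genuinely diverge is in establishing the disjoint-or-equal dichotomy, i.e.\ in showing that $R$ is transitive. The paper does this in one shot via Knaster--Tarski: it passes to the equivalence relation $\overline{S}$ generated by $R$ and the corresponding closures $\overline{A_i}$, uses \autoref{lem:predicate-liftings-preserve-equivalence-classes}(2) to show that overlapping $A_l,A_p$ force $\plift_{l,X}(\overline{A_1},\dots,\overline{A_k})=\plift_{p,X}(\overline{A_1},\dots,\overline{A_k})$, deduces that $(\overline{A_1},\dots,\overline{A_k})$ is a post-fixed point of $h$, and concludes $\overline{A_i}\subseteq A_i$ by maximality of the greatest fixed point. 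You instead run a transfinite induction along the descending Kleene chain $A_i^0=X$, $A_i^{\alpha+1}=\xi^{-1}[\plift_{i,X}(A_1^\alpha,\dots,A_k^\alpha)]$ with intersections at limits, showing the dichotomy is preserved at every stage. Both work; the paper's argument is shorter and needs only that post-fixed points lie below the greatest fixed point, whereas yours additionally relies on the standard (but not entirely free) fact that the ordinal iteration from the top stabilizes at the greatest fixed point for merely monotone maps. In exchange, your induction isolates exactly which closure properties of the dichotomy do the work (preservation under the liftings, under $\xi^{-1}$, and under arbitrary intersections) and avoids the bookkeeping with generated equivalence relations and closures.
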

\noindent (In the above lemma, we restrict to all $\plift_i$ having
full arity~$k$ and using their arguments in the given order only in
the interest of readability; this is w.l.o.g.\ since we can just
reorder arguments and add dummy arguments.)
\begin{proof}[Sketch]
  Let $S \subseteq X \times X$ be the relation
  \[
    S = \{ ( x_1,x_2 )\mid \exists A_i\,.\,x_1\in A_i\wedge x_2\in A_i
    \}\cup\Delta_X.
  \]
  Using \autoref{lem:predicate-liftings-preserve-equivalence-classes}
  one shows first that~$S$ is an equivalence relation, which already
  takes care of the second part of the claim, and then that~$S$ is a
  $\Lambda$-bisimulation. The first claim of the lemma then follows by
  Theorem~\ref{thm:Lambda}.
\end{proof}

\noindent The final ingredient of our Kleene-type correspondence is
the following adaptation of Bekič's bisection lemma \cite{Bekic84}:
\begin{lem}\label{lem:bekic}
  For complete lattices $(X,\le)$, $(Y,\le)$ and for every pair of monotone maps
  $f: X\times Y \rightarrow X$ and $g: X\times Y \rightarrow Y$, we have
  \[
    \nu ( x, y ) . ( f( x,y ),g( x,y ) ) = (x_0, y_0)
    \quad\text{ with }\quad
    \begin{array}{rl}
    x_0 &= \nu x.f( x, \nu y . g( x, y ) ) \\
    y_0 &= \nu y. g( x_0, y ).
    \end{array}
  \]
\end{lem}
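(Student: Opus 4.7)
The plan is to verify that the pair $(x_0,y_0)$ on the right is a fixed point of the joint operator $F(x,y)=(f(x,y),g(x,y))$ on $X\times Y$, and then to show that it dominates every post-fixed point of~$F$; by Knaster--Tarski this identifies it as $\nu F$.

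For the fixed point check, I would first note that $y_0=\nu y.\,g(x_0,y)$ is by definition a fixed point of $y\mapsto g(x_0,y)$, giving $g(x_0,y_0)=y_0$ immediately. For the first coordinate, $x_0$ is the greatest fixed point of the map $h(x):=f(x,\nu y.\,g(x,y))$, so $x_0=h(x_0)=f(x_0,\nu y.\,g(x_0,y))=f(x_0,y_0)$. The auxiliary point to be careful about is that $h$ must be monotone for $\nu x.\,h(x)$ to exist in the usual sense; this follows from monotonicity of $f$ and $g$ together with the standard fact that parametrised greatest fixed points inherit monotonicity in the parameter (i.e., $x\mapsto\nu y.\,g(x,y)$ is monotone, because if $x_1\le x_2$ then $\nu y.\,g(x_1,y)$ is a post-fixed point of $y\mapsto g(x_2,y)$).

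For maximality, take any $(u,v)\in X\times Y$ with $u\le f(u,v)$ and $v\le g(u,v)$; I would like to show $u\le x_0$ and $v\le y_0$. From $v\le g(u,v)$, Knaster--Tarski gives $v\le\nu y.\,g(u,y)$. Monotonicity of $f$ then yields $u\le f(u,v)\le f(u,\nu y.\,g(u,y))=h(u)$, so $u$ is a post-fixed point of $h$ and thus $u\le\nu x.\,h(x)=x_0$. Finally, $v\le g(u,v)\le g(x_0,v)$ by monotonicity of $g$ in the first argument and $u\le x_0$, so $v$ is a post-fixed point of $y\mapsto g(x_0,y)$ and therefore $v\le\nu y.\,g(x_0,y)=y_0$.

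I do not anticipate a serious obstacle: the whole argument is classical Bekič and the only step where one has to be slightly careful is the monotonicity of the parametrised fixed point $x\mapsto\nu y.\,g(x,y)$, which legitimises the use of Knaster--Tarski for $h$. Everything else is a bookkeeping exercise in applying monotonicity and the defining universal property of greatest fixed points.
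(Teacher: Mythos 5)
Your proposal is correct and follows essentially the same route as the paper's proof: verify that $(x_0,y_0)$ is a fixed point of $\langle f,g\rangle$ and then show it dominates everything below a (post-)fixed point, using that a post-fixed point of a monotone map lies below its greatest fixed point. The only (harmless) differences are that you argue maximality over post-fixed points where the paper restricts to fixed points, and that you explicitly justify monotonicity of $x\mapsto\nu y.\,g(x,y)$, which the paper leaves implicit.
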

Although in \cite{Bekic84} this lemma only covers least fixed points
in a slightly different setting, the proof is the same.
\iffull For completeness, we provide it in the appendix.\fi

Using \autoref{lem:bekic} we can transform every expression
$\phi\in\gcexpr$ into a system of flat equations
$(z_1 = \phi_1,\dots, z_k = \phi_k)$ for some $k\in\omega$, i.e.\@
equations without nested modalities or fixed point operators: This is
done by first ensuring that every fixed point operator uses a
different fixed point variable and then binding every modality that is
not nested directly under a fixed point operator with a new fixed
point operator using a fresh variable. Thus we can rewrite every expression
$\phi\in\gcexpr$ in the form
\[
  \phi \equiv \nu z_1 .\, \modalsym_1 ( z_1, \nu z_2 .
  \modalsym_2( \dots ),\dots, \nu z_k .
  \modalsym_k( \dots ))
\]
for some modalities $L_i\in\modalset$, $i = 1, \ldots, k$.  If we
now inductively apply \autoref{lem:bekic} and, for readability,
additionally normalize every modality to have as many arguments as
there are different fixed point variables in such an expression,
introducing dummy arguments where necessary, then we can write
$\phi$ as a system
\begin{equation}\label{eq:system}
\begin{aligned}
  z_1 &= \modalsym_1( z_1, z_2, \dots, z_k )\\
  z_2 &= \modalsym_2( z_1, z_2, \dots, z_k )\\
  \vdots &\\
  z_k &= \modalsym_n( z_1, z_2, \dots, z_k )
\end{aligned}
\end{equation}
of flat equations. Given any coalgebra $C = (X, \xi)$, the above
system induces an obvious map of the form~\eqref{eq:flat-gfp} (replacing
$z_i$ by $X_i$ and $L_i$ by $\plift_i= \sem{L_i}$), and the first
components of its greatest fixed point is the semantics
$\sem[C]{\phi}$. The following example shows a concrete case. 

\begin{expl}[Applying Bekič's bisection lemma]
  \label{exp:bekbislem}
   Consider the expression
  \[
    \phi = \nu x. \modalsym_1 ( x, \modalsym_2( x ), \nu y.
    \modalsym_3( y, \nu z . \modalsym_2( z ) ) )
  \]
  In order to transform it as per the procedure indicated, we first
  need to add a fixed point operator with a fresh variable to the
  first occurrence of $\modalsym_2$:
  \[
    \phi = \nu x. \modalsym_1 ( x, \nu w .\modalsym_2( x ), \nu y.
    \modalsym_3( y, \nu z . \modalsym_2( z ) ) )
  \]
Then we can form the equation system for the variables $x, w, y, z$
\begin{align*}
  x &= \overline\modalsym_1( x, w, y, z ) = \modalsym_1( x, w, y ) \\
  w &= \overline\modalsym_2( x, w, y, z ) = \modalsym_2( x ) \\
  y &= \overline\modalsym_3( x, w, y, z ) = \modalsym_3( y, z ) \\
  z &= \overline\modalsym_4( x, w, y, z ) = \modalsym_2( z )
\end{align*}
where we extend \modalset with additional operators
$\overline\modalsym_i$ having dummy arguments, defined as indicated.
The semantics of this equation system in a coalgebra $C=( X, \xi )$ is
defined as the greatest fixpoint $( A_0, A_1, A_2, A_3 )$ of the map
$h: \copow^n X \to \copow^n X$ defined by
\[
  h:
  \begin{pmatrix}
    X_1 \\ X_2 \\ X_3 \\ X_4
  \end{pmatrix}
  \mapsto
  \begin{pmatrix}
    \xi^{-1}[\sem[X]{\modalsym_1}( X_1, X_2, X_3 )] \\
    \xi^{-1}[\sem[X]{\modalsym_2}( X_1 )] \\
    \xi^{-1}[\sem[X]{\modalsym_1}( X_2, X_4 )] \\
    \xi^{-1}[\sem[X]{\modalsym_2}( X_4 )]
  \end{pmatrix}.
\]
The semantics of $\phi$ in $C$ is then $\sem[C]{\phi} = A_0$.
\end{expl}
\noindent The following two results together establish a Kleene-type
correspondence for the generic expressions of
\autoref{def:generic-expressions}.  
\begin{thm}
  \label{thm:generic-expression-describes-exactly-one-bisimulation-class}
  Every expression $\phi \in \gcexpr$ describes exactly one
  behavioural equivalence class, which is moreover realized in a
  finite coalgebra. Explicitly: there exists a state $x$ in a finite
  coalgebra such that for every coalgebra $C$, $\sem[C]{\phi}$
  contains precisely the states of $C$ that are behaviourally
  equivalent to~$x$.
\end{thm}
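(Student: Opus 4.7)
The plan is to build an explicit finite witness coalgebra $C_\phi$ from the syntax of $\phi$. Using the flattening procedure described just before the statement of the theorem---Bekič's lemma (\autoref{lem:bekic}) together with $\alpha$-renaming and padding modalities with dummy arguments---I first rewrite $\phi$ as a flat system $z_i = L_i(z_1,\dots,z_k)$, $i = 1,\dots,k$, with the property that for every coalgebra $C=(X,\xi)$, the set $\sem[C]{\phi}$ equals the first component of the greatest fixed point of the monotone map $h_C$ on $(\copow X)^k$ from \eqref{eq:flat-gfp}, where $\plift_i := \sem{L_i}$.

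I then define the witness $C_\phi = (X_\phi,\xi_\phi)$ by $X_\phi := \{z_1,\dots,z_k\}$ and $\xi_\phi(z_i) := \cfpre_{\plift_i,X_\phi}(z_1,\dots,z_k)$, where $\cfpre_{\plift_i}$ is the functor operation associated to $\plift_i$ via \autoref{lem:pres}.\ref{item:op-from-pl}. By \eqref{eq:op-from-pl} this gives $\xi_\phi(z_i) \in \plift_{i,X_\phi}(\{z_1\},\dots,\{z_k\})$, so the tuple of singletons $(\{z_1\},\dots,\{z_k\})$ is a post-fixed point of $h_{C_\phi}$; Knaster-Tarski then places it below the greatest fixed point $(A_1,\dots,A_k)$, yielding $z_1 \in A_1 = \sem[C_\phi]{\phi}$.

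To conclude, I show that for any coalgebra $C=(Y,\zeta)$, $\sem[C]{\phi}$ consists exactly of the states of $C$ behaviourally equivalent to $z_1$. If $y \in Y$ is behaviourally equivalent to $z_1$, then $y \in \sem[C]{\phi}$ by \autoref{lem:invariance} applied to $z_1 \in \sem[C_\phi]{\phi}$. For the converse, given $y \in \sem[C]{\phi}$, I form the disjoint-union coalgebra $C \sqcup C_\phi$: by naturality of each $\plift_i$ along the coproduct injections, $h_{C \sqcup C_\phi}$ decomposes componentwise, so that its greatest fixed point has $i$-th component $A_i \sqcup B_i$, where $B_\bullet$ is the gfp of $h_C$. \autoref{lem:gfp-is-singleton} applied in $C \sqcup C_\phi$ then forces all elements of $A_1 \sqcup B_1$ to be behaviourally equivalent in the union; in particular $y$ and $z_1$ are, and since coproduct injections reflect behavioural equivalence, this transfers to cross-coalgebra equivalence between $C$ and $C_\phi$. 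The main obstacle is precisely this converse direction: \autoref{lem:gfp-is-singleton} only speaks about behavioural equivalence within a single coalgebra, so the disjoint-union argument---together with verifying that the flat greatest fixed point really does decompose componentwise along the coproduct injections---is what bridges the gap.
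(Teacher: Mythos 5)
Your proposal is correct and follows essentially the same route as the paper: flatten $\phi$ into a flat equation system, build the finite witness coalgebra on the equation variables (where the singletons form a post-fixed point), use \autoref{lem:invariance} for one inclusion and \autoref{lem:gfp-is-singleton} plus a coproduct argument for the other. The only cosmetic difference is that the paper justifies the passage to the coproduct via invariance of $\phi$ under the coalgebra-morphism injections rather than by decomposing the greatest fixed point componentwise, which spares the (easy but nonzero) verification you correctly identify as needed.
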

\begin{proof}[Sketch]
  By \autoref{lem:invariance}, it suffices to show that any two states
  (w.l.o.g.\ in the same coalgebra, using coproducts) satisfying
  $\phi$ are bisimilar. Since~$\phi$ can transformed into a
  system~\eqref{eq:system} of flat equations, this follows by
  \autoref{lem:gfp-is-singleton}. Realization in a finite coalgebra
  follows from the finite model property of the coalgebraic
  $\mu$-calculus~\cite{Cirstea11}, and alternatively is shown by
  constructing a model from the variables in a flat equation system.
\end{proof}

\begin{thm}
  \label{thm:kleene-theorem-for-generic-expressions}
  Let $C=(X, \xi)$ be a finite $T$-coalgebra. For every $x \in X$, there
  exists an expression $\phi \in \gcexpr$ such that
  $x \in \sem[C]{\phi}$.
\end{thm}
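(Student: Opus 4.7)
The plan is to reverse-engineer the flat-equation procedure used in the previous theorem: starting from the finite coalgebra, produce a flat equation system whose greatest fixed point pins down each individual state, and then convert it back into a closed guarded expression using (iterated) Bekič unfolding.

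More concretely, enumerate $X=\{x_1,\dots,x_k\}$. By strong expressivity of $\pliftset$, for each $i$ I can pick some modality $L_i/n_i\in\modalset$ and indices $j(i,1),\dots,j(i,n_i)\in\{1,\dots,k\}$ such that
\[
  \{\xi(x_i)\}=\sem[X]{L_i}(\{x_{j(i,1)}\},\dots,\{x_{j(i,n_i)}\}).
\]
Normalizing the arities to~$k$ by adding dummy arguments (exactly as done above \eqref{eq:system}), this gives a flat system $z_i=\overline L_i(z_1,\dots,z_k)$ for $i=1,\dots,k$. Interpreted in $C=(X,\xi)$, this system induces the operator $h\colon(\copow X)^k\to(\copow X)^k$ of the form~\eqref{eq:flat-gfp}. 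Now observe that the tuple $(\{x_1\},\dots,\{x_k\})$ is a post-fixed point of~$h$: for each $i$ we have
\[
  \xi(x_i)\in\{\xi(x_i)\}=\sem[X]{L_i}(\{x_{j(i,1)}\},\dots)
  =\sem[X]{\overline L_i}(\{x_1\},\dots,\{x_k\}),
\]
hence $\{x_i\}\subseteq\xi^{-1}[\sem[X]{\overline L_i}(\{x_1\},\dots,\{x_k\})]$. By Knaster--Tarski, $(\{x_1\},\dots,\{x_k\})$ is contained componentwise in the greatest fixed point $(A_1,\dots,A_k)$ of~$h$; in particular $x_i\in A_i$ for each~$i$.

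The next step is to exhibit a closed guarded expression $\phi_i\in\gcexpr$ with $\sem[C]{\phi_i}=A_i$. I simply invert the transformation that turned a closed guarded expression into a flat equation system in the paragraph preceding~\eqref{eq:system}: I apply Bekič's bisection lemma (\autoref{lem:bekic}) iteratively to the $k$ variables, each time splitting off the fixed point of one variable and substituting it back. This produces, for the first component,
\[
  \phi_1\equiv\nu z_1.\,\overline L_1\bigl(z_1,\nu z_2.\,\overline L_2(\dots),\dots,\nu z_k.\,\overline L_k(\dots)\bigr),
\]
and analogously $\phi_i$ starting from any other index. Each $\phi_i$ is closed by construction, and it is guarded because in the original system every occurrence of a variable $z_j$ sits directly under some modality~$\overline L_\ell$, and substitution preserves this property. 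By repeated application of \autoref{lem:bekic}, the semantics of $\phi_i$ in $C$ equals the $i$-th component of the greatest fixed point of~$h$, i.e.\ $\sem[C]{\phi_i}=A_i\ni x_i$.

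The main conceptual point — and the only part that could trip one up — is the passage from a flat system to a nested expression via Bekič; the routine is explicit in the discussion around \autoref{exp:bekbislem}, and the only thing to verify carefully is that every substitution keeps each bound variable under a modality, so that guardedness is preserved throughout the inductive unfolding. Everything else is the direct use of strong expressivity to set up the system and a one-line Knaster--Tarski argument to certify that $x_i$ lies in the greatest fixed point.
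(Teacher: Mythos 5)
Your proposal is correct and takes essentially the same route as the paper: the paper likewise invokes strong expressivity to set up the flat system $x_i = L_i(x_1,\dots,x_k)$, observes that the tuple of singletons is a (post-)fixed point so that $x_i\in A_i$ for the greatest fixed point, and then converts the system into a single expression ``in the standard manner'' (which is exactly the iterated Beki\v{c} unfolding you describe). Your explicit check that guardedness survives the substitutions is a detail the paper leaves implicit, and your careful use of ``post-fixed point'' (rather than fixed point, since $\xi^{-1}[\{\xi(x_i)\}]$ may be larger than $\{x_i\}$) is if anything slightly more precise than the paper's phrasing.
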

\begin{proof}
  Let $X = \{x_1, \dots, x_{k}\}$ and w.l.o.g.\@ $x = x_1$.  Since
  \pliftset is strongly expressive, for every $x_i \in X$ there is a
  modality $L_i$, w.l.o.g.\ with arity $k$ and prescribed argument
  ordering, such that
  \[
  \{ \xi(x_i) \} = \sem[X]{L_i}(\{x_{1}\}, \ldots, \{x_{k}\}).
  \]
  That is, the $\{x_i\}$ solve the system
  $(x_i=L_i(x_1,\dots,x_k))_{i=1,\dots,k}$ of flat fixed point
  equations, so for the greatest fixed point $(A_1,\dots,A_k)$ of the
  system, we have $x_i\in A_i$ for every~$i$, in particular
  $x=x_1\in A_1$. It now just remains to convert the equation system
  into an equivalent single expression in the standard
  manner~\cite{BradfieldStirling01} (incurring exponential blow-up);
  then $x\in\sem[C]{\phi}$ as desired.
\end{proof}
\begin{corollary}
  Every expression
  denotes a behavioural equivalence class of a state in a finite
  coalgebra, and conversely every such class is denoted by some
  expression.
\end{corollary}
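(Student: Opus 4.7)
The plan is to obtain the corollary as an immediate combination of the two preceding theorems, with only a minor argument needed to match the quantifier structure.

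For the first half (every expression denotes a behavioural equivalence class of a state in a finite coalgebra), I would simply invoke \autoref{thm:generic-expression-describes-exactly-one-bisimulation-class}: given $\phi\in\gcexpr$, that theorem hands us a state $x$ in a finite coalgebra such that for every coalgebra $C$, $\sem[C]{\phi}$ is precisely the behavioural equivalence class of~$x$. Specializing to the finite coalgebra in which~$x$ lives already gives what is claimed.

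For the converse (every behavioural equivalence class of a state in a finite coalgebra is denoted by some expression), fix a state~$x$ in a finite coalgebra $C=(X,\xi)$. By \autoref{thm:kleene-theorem-for-generic-expressions} there exists $\phi\in\gcexpr$ with $x\in\sem[C]{\phi}$. It remains to verify that $\sem[C]{\phi}$ equals the behavioural equivalence class of~$x$ in~$C$. This is where I would invoke \autoref{thm:generic-expression-describes-exactly-one-bisimulation-class} again: it tells us that $\sem[C]{\phi}$ consists exactly of those states of~$C$ behaviourally equivalent to some fixed witness state~$x'$; since $x\in\sem[C]{\phi}$, we have $x$ behaviourally equivalent to~$x'$, so $\sem[C]{\phi}$ is the behavioural equivalence class of~$x$.

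I do not anticipate any genuine obstacle here; the work has all been done in the two theorems, and the corollary is essentially a rephrasing that packages the existence of a suitable state (from the first theorem) and the existence of an expression for each state (from the second theorem) into a single bijective-style statement at the level of behavioural equivalence classes. The only subtlety worth flagging explicitly is that \autoref{lem:invariance} (invariance under behavioural equivalence), which is already used inside \autoref{thm:generic-expression-describes-exactly-one-bisimulation-class}, is what guarantees that ``$\sem[C]{\phi}$ is the class of~$x$'' does not depend on whether we view~$x$ inside~$C$ or inside the finite witness coalgebra produced by the first theorem.
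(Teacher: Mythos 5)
Your proposal is correct and follows exactly the route the paper intends: the corollary is stated without a separate proof precisely because it is the immediate combination of Theorem~\ref{thm:generic-expression-describes-exactly-one-bisimulation-class} and Theorem~\ref{thm:kleene-theorem-for-generic-expressions}, with invariance under behavioural equivalence (Lemma~\ref{lem:invariance}) bridging the two, just as you describe.
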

\begin{expl}
  \begin{sparenumerate}
  \item For the functor $TX = 2 \times X^A$ for $A = \{a,b\}$ consider
    the coalgebra with carrier $X = \{x_1, x_2\}$ and with coalgebra
    structure $\xi: X \to 2 \times X^A$ with
    $\xi(x_0) = (1, (a \mapsto x_0, b \mapsto x_1))$ and
    $\xi(x_1) = (0, (a \mapsto x_1, b \mapsto x_0))$. Then we clearly
      have $\{\xi(x_1)\} = \plift^1(\{x_1\},\{x_2\})$ and
    $\{\xi(x_2)\} = \plift^1(\{x_2\},\{x_1\})$. Using the syntax
    of~\autoref{ex:expr}.1 and following the proof
    of~\autoref{thm:kleene-theorem-for-generic-expressions}, we obtain
    the following expression for the behavioural equivalence
    class (i.e.~formal language) for $x_1$: 
    \[
      \nu x_1.\modab 1 {x_1} {\nu x_2. \modab 0 {x_2} {x_1}}.
    \]
    Note that this is the same expression (modulo
    $\alpha$-equivalence) as the third expression
    from~\autoref{ex:expr}.1.

  \item For the functor $\fpow(A \times -)$ and $A = \{a,b\}$ the coalgebra $C=(\{x,y,z,w\},
    \xi)$ depicted in
    \autoref{ex:expr}.2 satisfies the following equations:
      \[
        \{\xi(x)\} = \plift_C^{(a)}(\{y\}), \;
        \{\xi(y)\} = \plift_C^{(a,b,c)}(\{x,w,z\}), \;
        \{\xi(w)\} = \plift_C^{()}(), \;
        \{\xi(w)\} = \plift_C^{()}()
      \]
      By \autoref{thm:kleene-theorem-for-generic-expressions} $\{\{x\},\{y\},\{z\},\{w\}\}$
      solves the following system, reusing the same variable names,
      \[
        x = [a](y), \quad
        y = [a,b,a](x,w,z),\quad
        w = [()], \quad
        z = [()]
      \]
      which can be transformed as demonstrated in \autoref{exp:bekbislem} to the expression
      given in \autoref{ex:expr}.2, describing the behaviour of the state $x$.
    \item For the functor $T = \dist$ consider the expression from \autoref{ex:expr}.3:
      $\nu x. [\nicefrac{2}{3}, \nicefrac{1}{3}](x, \nu y.
      [\nicefrac{1}{6},\nicefrac{1}{3},\nicefrac{1}{2}](x,y, \nu z.
      [\nicefrac{1}{4},\nicefrac{3}{4}](x,z)))$, which transforms to the system
      \[
        x = [\nicefrac{2}{3}, \nicefrac{1}{3}](x, y), \quad
        y = [\nicefrac{1}{6},\nicefrac{1}{3},\nicefrac{1}{2}](x,y, z), \quad
        z = [\nicefrac{1}{4},\nicefrac{3}{4}](x,z).
      \]
      By \autoref{thm:generic-expression-describes-exactly-one-bisimulation-class} we can
      construct a coalgebra $C =(\{x,y,z\}, \xi)$ defined by:
      \begin{align*}
        \{\xi(x)\} &= \plift_X^{(\nicefrac{2}{3}, \nicefrac{1}{3})}(\{x\},\{y\})\\
        \{\xi(y)\} &=
        \plift_X^{(\nicefrac{1}{6},\nicefrac{1}{3},\nicefrac{1}{2})}(\{x\},\{y\},\{z\})\\
        \{\xi(z)\} &= \plift_X^{(\nicefrac{1}{4},\nicefrac{3}{4})}(\{x\},\{z\})
      \end{align*}
      which is exactly the coalgebra depicted in \autoref{ex:expr}.3 where $x$ is in the
      behavioural equivalence class of the above expression.
  \end{sparenumerate}
\end{expl}
An alternative approach to defining the semantics of expressions is to
construct a $T$-coalgebra structure on the set \gcexpr of closed and
guarded expressions, similarly as in the work of Silva et
al.~\cite{SilvaBR10} and also Myers~\cite{M13}. In
\autoref{thm:coalgebraic-and-modallogic-semantics-coincide} below we
show that this new semantics coincides with the previous one.
\begin{defn}
  We define a $T$-coalgebra
$\gcemor : \gcexpr \rightarrow T \gcexpr$ inductively by
\begin{align}
  \gcemor( \modalsym ( \phi_1, \dots, \phi_n ) )  &\in
    \sem{L} ( \{ \phi_1 \}, \dots, \{ \phi_n \} )
      \label{equ:coalgstruc1}\\
  \gcemor( \nu x.\phi ) & = \varepsilon( \phi[ \nu x.\phi/x
    ] ). \label{equ:coalgstruc2}
\end{align}
\end{defn}
This is actually a definition of $\gcemor$ because (a)
in~\eqref{equ:coalgstruc1}, $\sem{L}$ preserves singletons and thus
there is only one element in
$\sem{L} ( \{ \phi_1 \}, \dots, \{ \phi_n \} )$, and (b)~for the
inductive part \eqref{equ:coalgstruc2}, one can use the number of
top-level fixed point operators as a termination measure, which
decreases in each step because the fixed points are
guarded.

Now recall that a coalgebra $\xi: X\to TX$ is \emph{locally finite} if
every $x\in X$ is contained in a finite subcoalgebra of $\xi$. Locally
finite coalgebras are precisely the (directed) unions of finite
coalgebras (see~\cite{Milius10}). Thus, it follows from
\autoref{thm:kleene-theorem-for-generic-expressions} that for any
$x\in X$ in a locally finite coalgebra $\xi: X\to TX$, there exists a
$\phi \in \gcexpr$ with $x\in \sem[X]{\phi}$.

Moreoever, $\gcexpr$ is obviously not finite; however, arguing via
finiteness of the Fischer-Ladner closure~\cite{Kozen83} we obtain
\begin{prop}\label{prop:locally-finite}
  The $T$-coalgebra $(\gcexpr,\gcemor)$ is locally finite.
\end{prop}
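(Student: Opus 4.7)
The plan is to construct, for an arbitrary $\phi \in \gcexpr$, a finite subcoalgebra of $(\gcexpr, \gcemor)$ containing $\phi$, by means of a Fischer-Ladner-style closure. First I would, after an $\alpha$-renaming making all bound variables in $\phi$ distinct, define for every (syntactic) subexpression $\psi$ of $\phi$ a closed expression $\hat\psi$: if $z$ is a variable that is free in $\psi$ and is bound in $\phi$ by a uniquely determined binding subexpression $\nu z.\chi$, then replace every free occurrence of $z$ in $\psi$ by $\nu z.\chi$. Then set
\[
  \mathrm{FL}(\phi) := \{\,\hat\psi \mid \psi \text{ is a subexpression of } \phi\,\}.
\]
Since $\phi$ has only finitely many subexpressions, $\mathrm{FL}(\phi)$ is finite, and each $\hat\psi$ lies in $\gcexpr$ because $\phi$ is closed and guarded (guardedness is inherited by every $\hat\psi$, since each unfolded $\nu z.\chi$ already satisfies it).

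The next step is to check that $\gcemor$ restricts to $\mathrm{FL}(\phi)$, i.e.\ that $\gcemor(\hat\psi) \in T(\mathrm{FL}(\phi))$ for every subexpression $\psi$ of $\phi$. I would proceed by induction on $\psi$. If $\psi = L(\phi_1,\dots,\phi_n)$, then $\hat\psi = L(\hat\phi_1,\dots,\hat\phi_n)$, so by~\eqref{equ:coalgstruc1} and singleton-preservation of $\sem{L}$,
\[
  \gcemor(\hat\psi) \in \sem{L}(\{\hat\phi_1\},\dots,\{\hat\phi_n\}) \subseteq T(\mathrm{FL}(\phi)),
\]
since each $\hat\phi_i \in \mathrm{FL}(\phi)$. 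If $\psi = \nu z.\chi$, then $\hat\psi = \nu z.\hat\chi'$ where $\hat\chi'$ denotes $\chi$ with all free variables \emph{other than} $z$ substituted; by~\eqref{equ:coalgstruc2}, $\gcemor(\hat\psi) = \gcemor(\hat\chi'[\hat\psi/z])$, and one checks that $\hat\chi'[\hat\psi/z]$ is precisely $\hat\chi$, so we may apply the induction hypothesis --- except that this is not structural induction on $\psi$ (since $\hat\chi$ is not strictly smaller than $\hat\psi$). To make the induction go through, I would use guardedness: the number of top-level fixed-point operators in $\hat\chi$ is strictly smaller than in $\hat\psi$, exactly as in the well-definedness argument for $\gcemor$ itself. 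The variable case $\psi = z$ reduces to $\hat\psi = \widehat{\nu z.\chi}$ for the enclosing binder, so it is already covered.

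Combining the two facts, $\mathrm{FL}(\phi)$ is a finite subset of $\gcexpr$ carrying a subcoalgebra structure of $\gcemor$ and containing $\phi = \hat\phi$. Since $\phi \in \gcexpr$ was arbitrary, $(\gcexpr,\gcemor)$ is locally finite. The main obstacle I anticipate is the verification that $\mathrm{FL}(\phi)$ is closed under $\gcemor$ in the fixed-point case: the substitution $\chi[\nu z.\chi/z]$ prescribed by~\eqref{equ:coalgstruc2} needs to be identified with $\hat\chi$ under the substitution convention above, and termination of the unfolding recursion must be justified via the guardedness-based measure rather than by structural induction on the subexpression.
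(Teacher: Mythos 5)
Your proposal is correct and takes essentially the same route as the paper, which simply invokes the finiteness of the (negation\--free) Fischer--Ladner closure and the observation that the generated subcoalgebra of $(\gcexpr,\gcemor)$ lies inside it; your $\mathrm{FL}(\phi)$ is precisely that closure, built via the standard expansion map, with the verification the paper leaves implicit spelled out. Two small points to tidy: the binder $\nu z.\chi$ substituted for a free variable may itself contain free variables, so $\hat\psi$ should be defined by substituting binders from the outermost inward; and the induction can in fact be taken structurally on the subexpression $\psi$ of $\phi$ (not on $\hat\psi$), since $\chi$ is a proper subexpression of $\nu z.\chi$ and variable subexpressions contribute no new elements of $\mathrm{FL}(\phi)$, so the guardedness measure, while workable, is not needed.
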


\noindent The following theorem says that $(\gcexpr,\gcemor)$ serves
as a canonical model of the expression language:

\begin{thm}
  \label{thm:coalgebraic-and-modallogic-semantics-coincide}
  For every closed and guarded expression $\phi\in\gcexpr$ and every
  state~$x$ in a $T$-coalgebra~$C$, $x\in\sem[C]{\phi}$ iff $x$ is
  behaviourally equivalent to $\phi$ as a state in
  $(\gcexpr,\gcemor)$.
\end{thm}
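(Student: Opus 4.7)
The plan is to reduce the theorem to one central claim: $\phi\in\sem[\gcexpr]{\phi}$ for every $\phi\in\gcexpr$. Granting this, both directions of the biconditional follow from the machinery already in place. \autoref{thm:generic-expression-describes-exactly-one-bisimulation-class}, applied separately to the coalgebras $C$ and $(\gcexpr,\gcemor)$, identifies $\sem[C]{\phi}$ and $\sem[\gcexpr]{\phi}$ with the set of states of the respective coalgebras that are behaviourally equivalent to a single canonical state realizing~$\phi$. Since the central claim places $\phi$ into this distinguished class, the condition $x\in\sem[C]{\phi}$ reduces to $x$ being in that class, i.e., to $x$ being behaviourally equivalent to $\phi$.

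For the central claim, straight structural induction on $\phi$ fails: in the case $\phi=\nu z.\psi$ the body $\psi$ need not lie in $\gcexpr$. Instead, I would reuse the flattening procedure from the paragraph preceding \autoref{thm:generic-expression-describes-exactly-one-bisimulation-class}. Iterated application of \autoref{lem:bekic} transforms $\phi$ into a flat system
\[
  z_i=L_i(z_1,\dots,z_k), \qquad i=1,\dots,k,
\]
and simultaneously produces closed and guarded expressions $\hat\phi_1,\dots,\hat\phi_k\in\gcexpr$ with $\phi=\hat\phi_1$, where each $\hat\phi_i$ is syntactically of the form $\nu z_i.L_i(\cdots)$ whose inner arguments are the substituted pieces dictated by the Bekič unfolding. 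A single $\nu$-unfolding of $\hat\phi_i$ via clause~\eqref{equ:coalgstruc2} in the definition of $\gcemor$, combined with the substitutions provided by Bekič, then yields
\[
  \gcemor(\hat\phi_i)=\gcemor(L_i(\hat\phi_1,\dots,\hat\phi_k)),
\]
and clause~\eqref{equ:coalgstruc1} places this element into the singleton $\sem[\gcexpr]{L_i}(\{\hat\phi_1\},\dots,\{\hat\phi_k\})$. Equivalently, $\hat\phi_i\in\gcemor^{-1}[\sem[\gcexpr]{L_i}(\{\hat\phi_1\},\dots,\{\hat\phi_k\})]$, so that $(\{\hat\phi_1\},\dots,\{\hat\phi_k\})$ is a post-fixed point of the monotone map $h$ associated with the flat system on $(\gcexpr,\gcemor)$.

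By Knaster--Tarski, the tuple $(\{\hat\phi_1\},\dots,\{\hat\phi_k\})$ is then contained componentwise in the greatest fixed point $(A_1,\dots,A_k)$ of $h$. Since the flat-system construction yields $\sem[\gcexpr]{\phi}=A_1$, this gives $\phi=\hat\phi_1\in \sem[\gcexpr]{\phi}$, as required. The main obstacle I foresee is the precise syntactic bookkeeping: one must set up the iterated Bekič decomposition so that a single $\nu$-unfolding of each $\hat\phi_i$ produces exactly $L_i(\hat\phi_1,\dots,\hat\phi_k)$ with no residual $\nu$-operators, which amounts to threading the closed substituents for $\hat\phi_{i+1},\dots,\hat\phi_k$ carefully through the nested Bekič steps; once this canonical form for the $\hat\phi_i$ is pinned down, the remaining verification is routine.
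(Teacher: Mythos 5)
Your reduction of the theorem to the truth lemma $\phi\in\sem[\gcexpr]{\phi}$, with both directions of the biconditional then following from \autoref{lem:invariance} and \autoref{thm:generic-expression-describes-exactly-one-bisimulation-class}, is exactly the paper's first step. Where you genuinely diverge is in proving the truth lemma. The paper strengthens it to open expressions --- for any substitution $\sigma$ and valuation $\val$ with $\sigma(v)\in\val(v)$ on free variables, $\phi\sigma\in\semv{\gcexpr}{\val}{\phi}$ --- and proceeds by structural induction, handling $\nu x.\,\phi$ via a post-fixed-point argument for that single fixpoint and invoking \autoref{lem:alpha-eq} to rename bound variables. You instead stay entirely with closed expressions and reuse the Beki\v{c} flattening: the tuple of singletons $(\{\hat\phi_1\},\dots,\{\hat\phi_k\})$ is a post-fixed point of the map $h$ induced by the flat system on $(\gcexpr,\gcemor)$, hence lies below its greatest fixed point, whose first component is $\sem[\gcexpr]{\phi}$. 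This is a legitimate alternative: it avoids the substitution/valuation generalization altogether and reuses machinery already needed for the Kleene theorem, at the price of the syntactic bookkeeping you yourself flag (checking that one unfolding of $\hat\phi_i$ yields literally $L_i(\hat\phi_1,\dots,\hat\phi_k)$, which does work out once all bound variables are distinct and each binder's body is a modality). Two points you should not leave implicit, though: the normalization does not merely $\alpha$-rename but also inserts vacuous binders $\nu w.\,(-)$ around unguarded modalities, so $\hat\phi_1$ is not literally $\phi$; you still need that $\phi$ and $\hat\phi_1$ are behaviourally equivalent as states of $(\gcexpr,\gcemor)$ (via \autoref{lem:alpha-eq} for renaming, and via $\gcemor(\nu w.\psi)=\gcemor(\psi)$ for $w$ not free in $\psi$) and that $\sem[C]{\phi}=\sem[C]{\hat\phi_1}$. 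Likewise, the identification of $\sem[C]{\phi}$ with the first component of the greatest fixed point of the flat system is the content of the Beki\v{c} discussion preceding \autoref{thm:generic-expression-describes-exactly-one-bisimulation-class} and should be cited as such. With those made explicit, your argument goes through.
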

\noindent In particular, the above implies that
\begin{equation}
  \label{eq:truth-lem}
  \phi\in\sem[\gcexpr]{\phi}\qquad\text{for all $\phi\in\gcexpr$,}
\end{equation}
essentially a truth lemma for $\gcexpr$. For the proof of
Theorem~\ref{thm:coalgebraic-and-modallogic-semantics-coincide}, we note:
\begin{lem}\label{lem:alpha-eq}
  $\alpha$-Equivalent expressions are behaviourally equivalent as
  states in $(\gcexpr,\gcemor)$.
\end{lem}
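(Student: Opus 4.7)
My plan is to show that the quotient map $q: \gcexpr \epi \gcexpr/{\equiv_\alpha}$ lifts to a coalgebra morphism. For this it suffices to verify that $Tq(\gcemor(\phi)) = Tq(\gcemor(\phi'))$ for all $\phi \equiv_\alpha \phi'$, since then $\bar\gcemor([\phi]_\alpha) := Tq(\gcemor(\phi))$ is a well-defined coalgebra structure on $\gcexpr/{\equiv_\alpha}$ turning~$q$ into a coalgebra morphism, and any two states identified by a coalgebra morphism are behaviourally equivalent.

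I would prove the above equation by induction on the termination measure used in the definition of $\gcemor$ itself, namely the number of top-level fixed point operators. This measure is invariant under $\equiv_\alpha$, so both sides of the equation lie at the same stage. In the base case $\phi = \modalsym(\phi_1, \dots, \phi_n)$, the congruence property of $\equiv_\alpha$ forces $\phi' = \modalsym(\phi'_1, \dots, \phi'_n)$ with $\phi_i \equiv_\alpha \phi'_i$. The idea is to invoke \autoref{lem:pres}(\ref{item:op-from-pl}) to rewrite $\gcemor(\phi) = \cfpre_\gcexpr(\phi_1, \dots, \phi_n)$, where $\cfpre := \cfpre_{\sem{\modalsym}}$ is the natural transformation $(-)^n \to T$ induced by the singleton-preserving lifting $\sem{\modalsym}$, and then apply naturality of $\cfpre$ along~$q$; since $q(\phi_i) = q(\phi'_i)$, the resulting expressions $\cfpre_{\gcexpr/{\equiv_\alpha}}(q(\phi_1), \dots, q(\phi_n))$ coincide for $\phi$ and $\phi'$.

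In the inductive case $\phi = \nu x.\psi \equiv_\alpha \nu y.\psi' = \phi'$, the fixed point clause of $\gcemor$ unfolds both sides: $\gcemor(\phi) = \gcemor(\psi[\nu x.\psi/x])$ and $\gcemor(\phi') = \gcemor(\psi'[\nu y.\psi'/y])$. I would then use the standard fact that substituting $\alpha$-equivalent expressions into $\alpha$-equivalent open expressions yields $\alpha$-equivalent results, giving $\psi[\nu x.\psi/x] \equiv_\alpha \psi'[\nu y.\psi'/y]$. By guardedness every occurrence of the bound variable in $\psi$ sits under a modality, so no new top-level fixed point operators are introduced by the substitution, and the unfolded expression has strictly fewer top-level fixed point operators than~$\phi$; the induction hypothesis then yields the required equality.

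The main routine obstacle is making precise the syntactic properties of $\alpha$-equivalence that the argument relies on: that $\equiv_\alpha$ is a congruence aligning the outer shape of an expression with that of any $\alpha$-equivalent one, and that it interacts with substitution as described above. Both are standard facts about syntax with binding, but they need to be spelled out before the naturality argument in the modal case and the unfolding argument in the fixed point case combine into a rigorous proof.
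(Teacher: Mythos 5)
Your proposal is correct and follows essentially the same route as the paper: both arguments equip the quotient of \gcexpr by $\alpha$-equivalence with a coalgebra structure making the quotient map a coalgebra morphism, and verify this by the same case distinction, with the fixed point case handled by induction on the number of top-level fixed point operators via unfolding. The only (inessential) differences are that the paper defines the quotient structure by its own recursive clauses and then checks the morphism square using naturality and monotonicity of the predicate lifting $\sem{L}$, whereas you push the structure forward along the quotient map and check well-definedness using naturality of the induced operation $\cfpre_{\sem{L}}$ from \autoref{lem:pres}.
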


\begin{proof}[Theorem~\ref{thm:coalgebraic-and-modallogic-semantics-coincide}, sketch]
  It suffices to prove~\eqref{eq:truth-lem}: The `if' direction of the
  claim then follows from invariance of $\phi$ under behavioural
  equivalence (\autoref{lem:invariance}), and `only if' is by
  Theorem~\ref{thm:generic-expression-describes-exactly-one-bisimulation-class}.
  We generalize~\eqref{eq:truth-lem} to expressions~$\phi$ with free
  variables: Whenever $\sigma$ is a substitution of the free variables
  of~$\phi$ and $\val$ a valuation such that $\sigma(v)\in\val(v)$ for
  every free variable~$v$ of $\phi$, then
  \begin{equation*}
    \phi\sigma\in\semv{\gcexpr}{\val}{\phi}.
  \end{equation*}
  We proceed by induction on $\phi$, using Lemma~\ref{lem:alpha-eq} in
  the fixpoint case.
\end{proof}

\begin{rem}
  To give a concrete example use of the connection between expression
  languages and modal fixed point logics afforded by the above
  results, we note that we now obtain an alternative handle on
  equivalence of expressions that complements the standard approach
  via partition refinement: Expressions $\phi,\psi$ are equivalent iff
  some state described by $\phi$ (obtained, e.g., via the one of the model
  constructions in
  Theorems~\ref{thm:generic-expression-describes-exactly-one-bisimulation-class}
  and~\ref{thm:coalgebraic-and-modallogic-semantics-coincide})
  satisfies~$\psi$. Note that the latter is fairly easy to check as
  long as the modalities are computationally tractable, since $\psi$
  otherwise involves only greatest fixed points. This approach is
  similar to reasoning algorithms in the lightweight description logic
  $\mathcal{EL}$~\cite{BaaderBrandtLutz05}, where checking validity of
  $\phi\to\psi$ is reduced to model checking $\psi$ in a minimal model
  of~$\phi$; we leave a more detailed analysis to future work.
\end{rem}

\section{Conclusion and Further Work}

We have defined a generic expression language for behaviours of finite
set coalgebras based on predicate liftings, specifically on a strongly
expressive set of singleton-preserving predicate liftings. There are
mutual conversions between such sets of predicate liftings and functor
presentations, one direction being via the Moss liftings introduced by
Marti and Venema~\cite{MartiV15}; we have however demonstrated that
these fail to be mutually inverse in one direction, i.e.\ in general
not all singleton-preserving predicate liftings are Moss liftings. Our
language is presumably equivalent to the set-based instance of Myer's
expression language~\cite{M13}; our alternative presentation is aimed
primarily at showing that expression languages embed naturally into
the coalgebraic $\mu$-calculus, generalizing well-known results on the
relational
$\mu$-calculus~\cite{GrafSifakis86,GodskesenEA87,SteffenIngolfsdottir94,AcetoEA07}. The
benefit of this insight is to tighten the connection between
expression languages and specification logics, e.g.\ it allows for
combining model checking, equivalence checking, and reasoning within a
single formalism.  On a more technical note, we show, e.g., that one
can provide an alternative semantics of expressions by defining a
coalgebra structure on expressions, an approach pioneered by Silva et
al.~\cite{SilvaBR10} and used also by Myers~\cite{M13}; in the light
of the expressions/logic correspondence, this construction is now seen
as a canonical model construction for a fragment of the coalgebraic
$\mu$-calculus, and the core part of the proof that the two semantics
agree becomes just a truth lemma.

An important point for further work is to extend the current setup
from the base category $\Set$ to algebraic categories (such as join
semi-lattices or positive convex algebras) in order to generalize our
results to expression calculi involving convenient additional
operations (reflecting the ambient algebraic theory) such as
addition. A closely related point is the connection with coalgebraic
determinization~\cite{SilvaEA13}; it should be interesting to see
whether our ideas can lead to expression calculi for coarser system
equivalences than bisimilarity, such as trace equivalence for
transition systems or distribution bisilimarity for Segala systems.
Such a generalization might be based on our recent approach
to coalgebraic trace semantics via graded monads~\cite{MiliusEA15}.

%
%
\iffull
\clearpage
\appendix

\section{Omitted Details and Proofs}

\subsection*{Proof of \autoref{lem:pres}}

  \emph{\ref{item:op-from-pl}.:} First note that~\eqref{eq:op-from-pl}
  is really a definition of $\cfpre_\plift$ because $\plift$ is
  preserves singletons. Moreover, $\cfpre_\plift$ is natural because
  for $f:X\to Y$ and $x_i\in X$,
  \begin{align*}
    \cfpre_{\plift, X}( x_1,\ldots,x_n ) & \in \plift_X( \{ x_1
    \},\ldots,\{ x_n \} ) & \tag*{(by definition)}\\
      & \subseteq \plift_X( f^{-1}[ \{ f(x_1) \}
      ],\ldots,f^{-1}[ \{ f(x_n) \} ] ) & \tag*{(\plift monotone)} \\
      & = (Tf)^{-1}[ \plift_Y( \{ f(x_1) \},\ldots,\{ f(x_n)
      \} ) ] & \tag*{(naturality of \plift)}\\
      & = (Tf)^{-1}[ \{ \cfpre_{\plift, Y}( f(x_1),\ldots,f(x_n) )
      \}] & \tag*{(by definition)}
    \\
    \Rightarrow\ &Tf ( \cfpre_{\plift, X} ( x_1,\ldots,x_n )) =
    \cfpre_{\plift, Y}( f(x_1),\ldots,f(x_n) )
  \end{align*}

  \emph{\ref{item:pres-from-pls}.:} By the previous item, we obtain a
  natural transformation
  $\fpre = [ \cfpre_\plift ]_{\plift\in\pliftset}$, and
  strong expressivity implies that $\fpre$ is componentwise
  surjective.
\qed

\subsection*{Proof of \autoref{prop:laxdescrpres}}
  Recall that $\plift = \nabla^L\circ\cfpre \copow$ by definition. Let
  $s_X: X \rightarrow \copow X$ be the function $s_X( x ) = \{ x \}$. Then we have
    \begin{align*}
      & \plift_X( \{x_1\},\ldots,\{x_{n}\} ) \\
      & = \{ t\in TX \mid ( t, \cfpre_{\copow X}( \{ x_1
        \},\ldots,\{ x_{n} \} ) ) \in L(\in_X) \} \\
      & = \{ t\in TX \mid ( t, \cfpre_{\copow X}( s_X( x_1
        ),\dots,s_X( x_{n} ) ) ) \in L(\in_X) \} \\
      & = \{ t\in TX \mid ( t, Ts_X \circ \cfpre_X( x_1,\dots,x_{n}
        ) ) \in L(\in_X) \} \\
      & = \{ t\in TX\mid( t,\cfpre_X( x_1,\dots,x_{n} )
        )\in L(\in_X);\irel{( Ts_X )} \} \\
      & = \{ t\in TX\mid( t,\cfpre_X( x_1,\dots,x_{n} )
        )\in L(\in_X ; \irel{s_X}) \} \tag*{(Proposition~\ref{prop:laxprop})}\\
      & = \{ t\in TX\mid( t,\cfpre_X( x_1,\dots,x_{n} )
        )\in L(\Delta_X) \} \tag*{$(s_X ; \mathord{\ni_X} = \Delta_X)$} \\
      & = \{ \cfpre_X( x_1,\dots,x_{n} ) \}
        \tag*{\qed}
    \end{align*}

\subsection*{Proof of \autoref{prop:moss-strong-expr}}
  Let $t\in TX$; we need to show that
  $\{ t \} = \plift( \{ x_1 \},\dots,\{ x_n \} )$ for some Moss
  lifting $\plift$ and $x_1, \ldots, x_n\in X$. Since
  $\fpre: T_\Sigma \to T$ from our functor presentation of $T$ has
  surjective components, there exists some $\cfpre/n \in \Sigma$ such
  that $t = \fpre_X(x_1, \ldots, x_n) = \cfpre_X( x_1,\dots,x_n )$ for
  some $x_1, \ldots, x_n\in X$.
  Because $L$ is a lax extension, we have
  $Ts_X \subseteq Ls_X \subseteq L(\in_X)$ for the function
  $s_X : x\mapsto \{ x \}$ and thus
  \[
    (t, Ts_X(t)) = ( t, \cfpre_{\copow X}( \{ x_1 \},\dots,\{ x_n \} )
    ) \in L(\in_X)
  \]
  since
  $Ts_X(t)
  = Ts_X(\cfpre_X(x_1, \ldots, x_n))
  = \cfpre_{\copow X} (s_X(x_1), \ldots, s_X(x_n))$.
  Thus $t \in \plift( \{ x_1 \},\dots,\{ x_n \} )$
  (see~\autoref{def:Mosslift}), and since Moss liftings 
  preserve singletons, we conclude
  $\{ t \} = \plift( \{ x_1 \},\dots,\{ x_n \} )$ as desired. \qed

\subsection*{Details for \autoref{rem:moss-unary}}

We show that $\plift = \plift^{(\cfpre^\plift)}$ for monotone
singleton-preserving \emph{unary} predicate liftings~$\plift$. Let
$\plift/1$ and $\plift'/1$ be predicate liftings that agree on
singletons. Now for any $X$ and $A \subseteq X$ consider the
characteristic map $\chi_A: X \to 2 = \{\bot, \top\}$, which satisfies
$A = \chi_A^{-1}(\top)$. Then using naturality and the fact that
$\plift_2(\{\top\}) = \plift_2'(\{\top\})$, we have
  \begin{align*}
    \plift_X(A)
    & = \plift_X(\chi_A^{-1}(\{\top\}))
      = (T\chi_A)^{-1}(\plift_2(\{\top\}))\\
    & = (T\chi_A)^{-1}(\plift_2'(\{\top\}))
      = \plift_X'(\chi_A^{-1}(\{\top\}))
      = \plift_X'(A).
  \end{align*}
  Now observe that $\plift$ and $\plift^{(\cfpre^\plift)}$ agree on
  singletons, thus they are equal as desired. 

\subsection*{Proof of \autoref{lem:fp-substitution}}
  First we prove that for any expression $\phi\in\expr$ that does not contain free
  variables that are bound in another expression $\psi\in\expr$ the following holds:
  \begin{gather}
    \sem\psi^{\val[ z\mapsto\sem\phi^\val ]} = \sem{\psi[ \phi/z ]}^\val
    \label{eq:substitutionlemma}
  \end{gather}
  Induction on $\psi$. If $\psi = x \neq z$ then
  $\sem x^{\val[ z\mapsto\sem\phi^\val ]} = \sem x^\val = \sem{x[ \phi/z
  ]}^\val$; if $\psi = z$ then $\sem z^{\val[ z\mapsto\sem\phi^\val ]} =
  \sem\phi^\val = \sem{z[ \phi/z ]}^\val$; for $\psi = \modalsym(
  \phi_1,\ldots,\phi_n )$:
  \begin{gather*}
    \sem{\modalsym( \psi_1,\ldots,\psi_n )}^{\val[ z\mapsto\sem\phi^\val
    ]} = \xi^{-1}[ \sem{L}( \sem{\psi_1}^{\val[ z\mapsto\sem\phi^\val
    ]},\ldots,\sem{\psi_n}^{\val[ z\mapsto\sem\phi^\val ]} ) ]\\
    \overset{I.H.}= \xi^{-1}\big[ \sem{L}( \sem{\psi_1[ \phi/z
    ]},\ldots,\sem{\psi_n[ \phi/z ]} ) \big] = \sem{\modalsym(
    \psi_1[ \phi/z ],\ldots,\psi_n[ \phi/z ] )}\\
    = \sem{\modalsym( \psi_1,\ldots,\psi_n )[ \phi/z ]};
  \end{gather*}
  if $\psi = \nu z.\psi_1$ then $\sem{\nu z.\psi_1}^{\val[
    z\mapsto\sem\phi^\val ]} = \sem{\nu z.\psi_1}^\val = \sem{(\nu z.\psi)[ \phi/z
  ]}^\val$; if $\psi = \nu x.\psi_1$ with $x \neq z$ then:
  \begin{gather*}
    \sem{\nu x.\psi_1}^{\val[ z\mapsto\sem\phi^\val ]}
    = \nu Y.\sem{\psi_1}^{\val[ z\mapsto\sem\phi^\val ][ x\mapsto Y ]}
    = \nu Y.\sem{\psi_1}^{\val[ x\mapsto Y ][ z\mapsto\sem\phi^\val ]}\\
    \overset{I.H.}= \nu Y.\sem{\psi_1[ \phi/z ]}^{\val[ x\mapsto Y ]}
    = \sem{\nu x.( \psi_1[ \phi/z ] )}^{\val} = \sem{( \nu x.\psi_1
    )[ \phi/z ]}^\val
  \end{gather*}
  where the last equation holds because of the assumption that the free variables of $\phi$
  are not bound in $\psi$.
  
  \noindent It follows that
  \[
    \sem{\nu z.\phi} = \nu Y.\sem\phi^{[z\mapsto Y]} = \sem\phi^{[z\mapsto
    \sem{\nu z.\phi}]}
    \overset{\eqref{eq:substitutionlemma}}= \sem{\phi[ \nu z.\phi / z ]}.
  \]

\subsection*{Proof of \autoref{lem:predicate-liftings-preserve-equivalence-classes}}

Denote by $q: X \epi X/S$ the canonical quotient map, let
$C_1,\ldots,C_m \subseteq X$ be $S$-closed, and let $\plift/m$ be monotone
and preserve singletons. Then $C_i = q^{-1} [ q[C_i]]$ for
all~$i$. Therefore, we have for every $t\in \copow TX$:
\begin{gather}
    t \in \plift_{X}(C_1,\dots C_m) =
      \plift_{X}(q^{-1}[q[C_1]],\dots,
      q^{-1}[ q [ C_m ] ]) \nonumber \\
    \Rightarrow
    t \in ( Tq )^{-1} [\plift_{X/S}( q[ C_1 ],\dots,q[ C_m
      ] )] \label{eq:gat} \\
    \Rightarrow Tq(t) \in \plift_{X/S}(q[C_1],\dots, q[ C_m
      ]).\nonumber
\end{gather}

\begin{sparenumerate}
\item If there is some $t \in \plift_{X} (A_1,\dots A_n) \cap \plift_{X}'
(B_1,\dots B_{n'})$, then we show the inclusion. For $s\in \plift_{X} (A_1,\dots
A_n)$, \eqref{eq:gat} provides
\[
  \{Tq(t), Tq(s)\} \subseteq \plift_{X/S}(q[A_1],\ldots,q[A_n]).
\]
Every $A_i\subseteq X$ is an $S$-equivalence class or empty, so $q[A_i]$
is at most a singleton. By monotonicity and singleton preservation of $\plift$,
the right-hand side is at most a singleton, and thus $Tq(t) = Tq(s)$. Applying
$\eqref{eq:gat}$ to $t$ and $B_1,\ldots,B_{n'}$, we have
$Tq(s) = Tq(t) \in \plift'(q[B_1],\ldots,q[B_{n'}])$, and consequently $s\in
\plift'(B_1,\ldots,B_{n'})$, again by \eqref{eq:gat}.

\item
  In the case where the sets are not disjoint, the equality is
  obtained by proving both inclusions using point~(1).  
  \qed
\end{sparenumerate}

\subsection*{Proof of \autoref{lem:gfp-is-singleton}}

  Let $S \subseteq X \times X$ be the relation
  \[
    S = \{ ( x_1,x_2 )\mid \exists A_i\,.\,x_1\in A_i\wedge x_2\in A_i
    \}\cup\Delta_X.
  \]
  (1)~We prove that $S$ is an equivalence relation.  Let~$\overline S$
  be the equivalence relation generated by~$S$, and let
  $\overline{A_i}$ be the $S$-closure of $A_i$. Then the
  $\overline{A_i}$ are also $\overline S$-closed, and  we have
  \[
    \overline S = \{ ( x_1,x_2 )\mid\exists
    \overline{A_i}\,.\,x_1\in\overline{A_i}\wedge x_2\in\overline{A_i}
    \}\cup\Delta_X.
  \] 
  It follows that each $\overline{A_i}$ is either empty (if
  $A_i=\emptyset$) or else an $\overline S$-equivalence
  class.

  Now we show that $( \overline{A_1},\dots,\overline{A_k} )$ is a
  post-fixed point of $h$: Because $(A_1,\dots, A_n)$ is a fixed point
  and the $\plift_i$ are monotone, we have
  \begin{equation}\label{eq:fp}
    A_j = \xi^{-1} [ \plift_{j,X} (A_1,\dots,
    A_k)] \subseteq \xi^{-1} [\plift_{j,X} (\overline{A_1},\dots,
    \overline{A_k} )]\quad\text{for all $j$}.
  \end{equation}
  We will show that the post-fixed point condition
  \begin{equation}\label{eq:postfix}
    \overline{A_j} \subseteq \xi^{-1}[ \plift_{j,X}
    (\overline{A_1},\dots, \overline{A_k} )]
  \end{equation}
  holds for all $j$. To see this, it suffices by~\eqref{eq:fp} and the
  definition of $\overline{A_j}$ to show that whenever
  $A_l\cap A_p\neq\emptyset$, then
  \begin{equation}\label{eq:lambda-stable}
    \plift_{l,X} (\overline{A_1},\dots,
    \overline{A_k} ) = \plift_{p,X}
    (\overline{A_1},\dots, \overline{A_k}).
  \end{equation}
  So let $x\in A_l\cap A_p$. Then by~\eqref{eq:fp},
  $ \xi( x ) \in \plift_{l,X} (\overline{A_1},\dots,
  \overline{A_k} ) \cap \plift_{p,X}
  (\overline{A_1},\dots, \overline{A_k})$,
  and therefore~\eqref{eq:lambda-stable} follows by
  \autoref{lem:predicate-liftings-preserve-equivalence-classes}.2.

  Having shown that $(\overline{A_1},\dots,\overline{A_k})$ is a
  post-fixed point of $h$, we obtain that
  $\overline{A_i} \subseteq A_i$ for all $i$, which implies that
  $S = \overline S$ is an equivalence relation. Hence every $A_i$ is
  either empty or an equivalence class of $S$, whence
  $A_i \cap A_j = \emptyset$ or $A_i = A_j$ holds for all $i, j$ as
  claimed in the second part of the lemma.

  \medskip\noindent (2)~Now we prove that $S$ is a
  $\pliftset$-bisimulation, i.e.\@ $x \mathbin{S} y$ and
  $\xi(x) \in \plift_X (B_1,\dots,B_n)$ implies
  $\xi(y) \in \plift_X ( S[B_1], \dots, S[ B_n ])$ for every
  $\plift/n\in\pliftset$ and $B_j \subseteq X$. By
  Theorem~\ref{thm:Lambda}, this implies the first claim of the lemma,
  since $x,y\in A_j$ implies $x\,S\,y$ by construction of~$S$.

  By point~(1), $S$ is an equivalence relation, so
  $S[B_j] = \overline{B_j} \supseteq B_j$ is the $S$-closure of $B_j$;
  and since $\plift$ is monotone,
  $\xi(x) \in \plift_X (\overline{B_1}, \dots,
  \overline{B_n})$. Further, $x\mathbin{S}y$ implies that either $x=y$, in which
  case there is nothing to prove, or there exists~$j$ such that
  $x,y\in A_j$ and therefore
  \[
    \xi(x), \xi(y) \in \plift_{j,X}(A_1, \dots, A_k).
  \]
  Because $\xi( x )$ also lies in $\plift_X (\overline{B_1}, \dots,
  \overline{B_n}) $, we obtain
  \[
    \plift_{j,X}(A_1, \dots, A_k)
    \subseteq \plift_X (\overline{B_1}, \dots, \overline{B_n})
  \]
  by \autoref{lem:predicate-liftings-preserve-equivalence-classes},
  and thus $\xi(y) \in \plift_X (S[B_1], \dots, S[ B_n ])$. \qed

\subsection*{Proof of \autoref{lem:bekic}}
  By Tarski, all fixed points exist. $(x_0,y_0)$ is a fixed point of
  $\langle f,g  \rangle: X\times Y\to X\times Y$, because
  $y_0 = g(x_0,y_0)$ and thus also
  \[
    x_0 = f(x_0, \nu y.g(x_0,y)) = f(x_0,g(x_0,y_0)) = f(x_0,y_0).
  \]
  For any other fixed point $(x',y')$ of $\langle f,g \rangle$, we have
  \[
    x' = f(x',y') = f(x', g(x',y'))
    \le f(x', \nu y.g(x',y)),
  \]
  so $x' \le x_0$ and furthermore
  \(
    y' = g(x',y') \le g(x_0,y')
    \le \nu y.g(x_0,y)
  \).\qed

\subsection*{Proof of \autoref{thm:generic-expression-describes-exactly-one-bisimulation-class}}
  We first show that $\phi$ denotes a single behavioural equivalence
  class, i.e.\ (i)~we have invariance under behavioural equivalence
  (\autoref{lem:invariance}) and (ii) two states satisfying $\phi$ are
  bisimilar. In order to prove~(ii) we may w.l.o.g.~assume that the
  two states live in the same coalgebra; indeed, given two coalgebras
  $C$ and $D$ and states $x$ in $C$ and $y$ in $D$ satisfyfing $\phi$,
  then $\inl(x), \inr(y)$ in $C + D$ are behaviourally equivalent to
  $x$ and $y$, respectively, thus both satisfy $\phi$ by
  \autoref{lem:invariance}.

  Now let $C=(X,\xi)$ be a $T$-coalgebra; we need to
  show that any two elements of $\sem[C]{\phi}$ are behaviourally
  equivalent. As noted above, we can transform~$\phi$ into a
  system~\eqref{eq:system} of flat equations, and $\sem[C]{\phi}$ is
  the first component of the greatest fixed point of the corresponding
  map of the form~\eqref{eq:flat-gfp}. The
  claim thus follows by \autoref{lem:gfp-is-singleton}.
  
  It remains to show that there exists a finite coalgebra $C$ such
  that $\sem[C]{\phi}\neq\emptyset$. This is immediate from the finite
  model property of the coalgebraic $\mu$-calculus~\cite{Cirstea11};
  alternatively, avoiding such overkill, it is seen as follows: We
  define $C=(X,\xi)$ on the set~$X=\{x_1,\dots,x_k\}$ of the variables
  in the above flat equation system by
  $\xi(x_i)\in\sem[X]{L_i}(\{x_1\},\dots,\{x_k\})$ where
  $\phi_i=L_i(x_1,\dots,x_k)$. Then by construction,
  $(\{x_1\},\dots,\{x_k\})$ is a fixed point of the equation system,
  hence contained in the greatest fixed point, which proves the desired
  non-emptiness of the greatest fixed point.
%
\subsection*{Proof of \autoref{prop:locally-finite}}
  Recall that the Fischer-Ladner closure of a $\mu$-calculus formula
  is standardly defined as the closure under subformulas, negation,
  and fixed point unfolding. In the absence of negation, we adapt the
  definition to include only closure under subformulas and fixed point
  unfolding. We then, of course, inherit the standard result that the
  Fischer-Ladner closure is finite~\cite{Kozen83}. Now by construction
  of $\gcemor$, the subcoalgebra of $(\gcexpr,\gcemor)$ generated by
  $\phi \in \gcexpr$ contains only states from the
  Fisher-Ladner-closure of $\phi$, hence is finite. \qed

\subsection*{Proof of \autoref{thm:coalgebraic-and-modallogic-semantics-coincide}}

  It suffices to prove the truth lemma~\eqref{eq:truth-lem}: The `if'
  direction of the claim then follows from invariance of $\phi$ under
  behavioural equivalence (\autoref{lem:invariance}), and `only if' is
  by
  Theorem~\ref{thm:generic-expression-describes-exactly-one-bisimulation-class}.

  We strengthen~\eqref{eq:truth-lem} to a claim on expressions~$\phi$
  possibly having free variables: Whenever $\sigma$ is a substitution
  of the free variables of~$\phi$ and $\val$ a valuation such that
  $\sigma(v)\in\val(v)$ for every free variable~$v$ of $\phi$, then
  \begin{equation}
    \label{eq:truth-open}
    \phi\sigma\in\semv{\gcexpr}{\val}{\phi}.
  \end{equation}
  We prove~\eqref{eq:truth-open} by induction over $\phi$. The case
  for fixed point variables is just by the assumption on $\sigma$ and
  $\val$. For the modal case, we calculate as follows:
  \begin{align*}
    L(\phi_1,\dots,\phi_n)\sigma & = L(\phi_1\sigma,\dots,\phi_n\sigma)\\
    & \in\gcemor^{-1}[\sem{L}(\{\phi_1\sigma_1\},\dots,\{\phi_n\sigma\})
      & \tag{by definition}\\
    & \subseteq\gcemor^{-1}[\sem{L}(\semv{\gcexpr}{\val}{\phi_1},\dots,
      \semv{\gcexpr}{\val}{\phi_n}\})
      & \tag{induction, monotonicity}\\
    & = \semv{\gcexpr}{\val}{L(\phi_1,\dots,\phi_n)}. &\tag{semantics}
  \end{align*}
  Finally, for the fixed point case $\nu x.\,\phi$, first note that by
  Lemma~\ref{lem:alpha-eq}, we can assume that $x$ does not occur as a
  free variable in $\sigma(v)$ for any free variable~$v$
  of~$\nu x.\,\phi$. Moreover, since $x$ is not free in
  $\nu x.\,\phi$, $\sigma$ does not touch~$x$. Thus,
  $(\nu x.\,\phi)\sigma=\nu x.\,(\phi\sigma)$. By construction of
  $\gcemor$, $\nu x.\,(\phi\sigma)$ is, as a state of~$\gcexpr$,
  behaviourally equivalent to $\phi\sigma[\nu x.\,\phi\sigma/x]$,
  which by the inductive hypothesis is contained in
  $\semv{\gcexpr}{\kappa'}{\phi}$ where $\kappa'$ arises from $\kappa$
  by assigning to~$x$ the value
  $\{\nu x.\,\phi\sigma\}=\{(\nu x.\,\phi)\sigma\}$. This shows that
  $\{(\nu x.\,\phi)\sigma\}$ is a post-fixed point of the map defining
  $\semv{\gcexpr}{\val}{\nu x.\,\phi}$, and hence contained in
  $\semv{\gcexpr}{\val}{\nu x.\,\phi}$, which proves the claim. \qed

\subsection*{Proof of \autoref{lem:alpha-eq}}

  Let $\pi_\alpha:\gcexpr\to\gcexpr/\alpha$ denote the quotient map of
  $\gcexpr$ modulo $\alpha$-equivalence. Paralleling the definition of
  $\gcemor$, we define a $T$-coalgebra structure $\gcemor_\alpha$ on
  $\gcexpr/\alpha$ by
  \begin{align*}
      \gcemor_\alpha( \pi_\alpha(\modalsym ( \phi_1, \dots, \phi_n ) ))  &\in
    \sem{L} ( \{ \pi_\alpha(\phi_1) \}, \dots, \{ \pi_\alpha(\phi_n) \} )\\
  \gcemor( \pi_\alpha(\nu x.\phi) ) & = \varepsilon( \pi_\alpha(\phi[ \nu x.\phi/x
    ]) ). 
  \end{align*}
  One sees in largely the same way as for $\gcemor$ that this is
  actually a definition, noting additionally that $\alpha$-equivalent
  transformations of a formula $L(\phi_1,\dots,\phi_n)$ necessarily
  happen in its arguments $\phi_i$ and that the number of top-level
  fixed point operators is invariant under $\alpha$-equivalence. We
  are done once we show that
  $\pi_\alpha:(\gcexpr,\gcemor)\to(\gcexpr/\alpha,\gcemor_\alpha)$ is
  a coalgebra morphism. We proceed by case distinction over the shape
  of states $\phi\in\gcexpr$:

  First assume that $\phi$ has the form
  $\phi=L(\phi_1,\dots,\phi_n)$. We have to show
  $T\pi_\alpha(\gcemor(L(\phi_1,\dots,\phi_n)))\in\sem{L}(\{\pi_\alpha(\phi_1)\},\dots,\{\pi_\alpha(\phi_n)\})$. By
  naturality, this is equivalent to
  $\gcemor(L(\phi_1,\dots,\phi_n))\in\sem{L}(\pi_\alpha^{-1}[\{\pi_\alpha(\phi_1)\}],\dots,\pi_\alpha^{-1}[\{\pi_\alpha(\phi_n)\}])$,
  which follows by monotonicity from the fact that
  $\gcemor(L(\phi_1,\dots,\phi_n))\in\sem{L}(\{\phi_1\},\dots,\{\phi_n\})$
  by definition.

  Second, assume that $\phi$ has the form $\phi=\nu x.\psi$. We
  proceed by induction on the number of top-level fixed point
  operators in $\phi$: We have
  \begin{align*}
T\pi_\alpha(\gcemor(\nu x.\psi)) & =T\pi_\alpha(\gcemor(\psi[\nu x.\psi/x])) 
                                   & \tag*{(by definition)}\\
    & = \gcemor_\alpha(\pi_\alpha(\psi[\nu x.\psi/x]))
      & \tag*{(induction)}\\
    & \gcemor_\alpha(\pi_\alpha(\nu x.\psi))
                                   & \tag*{(by definition)}.
  \end{align*}
  Since $\phi$ is closed, these are the only cases. \qed

\fi

\bibliographystyle{splncs04}
\bibliography{bibcmcs18}

\end{document}
